\documentclass[letterpaper]{article} %
\usepackage{aaai23}  %
\usepackage{times}  %
\usepackage{helvet}  %
\usepackage{courier}  %
\usepackage[hyphens]{url}  %
\usepackage{graphicx} %
\urlstyle{rm} %
\usepackage{natbib}  %
\usepackage{caption} %
\frenchspacing  %
\setlength{\pdfpagewidth}{8.5in} %
\setlength{\pdfpageheight}{11in} %

\usepackage[utf8]{inputenc} %
\usepackage{url}            %
\usepackage{booktabs}       %
\usepackage{amsfonts}       %
\usepackage{nicefrac}       %
\usepackage{microtype}      
\usepackage{wrapfig}

\usepackage{enumitem}

\usepackage{amsmath}
\usepackage{amsfonts} 
\usepackage{amsthm}	%
\usepackage{dsfont}
\usepackage{bm} %
\newtheorem{definition}{Definition}[]
\newtheorem{theorem}{Theorem}
\newtheorem{lemma}[theorem]{Lemma}

\newtheorem{corollary}{Corollary}

\newtheorem{condition}{Condition}
\newtheorem{observation}{Observation}

\usepackage{thm-restate}
\usepackage{theoremref} %
\usepackage{amsxtra} %
\usepackage{graphicx}
\usepackage{caption, subcaption}

\usepackage{algorithm}
\usepackage[ruled,vlined,algo2e,linesnumbered]{algorithm2e}

\DeclareMathOperator*{\argmax}{\arg\!\max}

\DeclareMathOperator*{\E}{\mathbb{E}}

\pdfinfo{
/TemplateVersion (2023.1)
}

\title{DM$^2$: Decentralized Multi-Agent Reinforcement Learning via Distribution Matching}

\author {
    Caroline Wang \textsuperscript{\rm 1}\thanks{Equal contribution.},
    Ishan Durugkar \textsuperscript{\rm 1}\footnotemark[1],
    Elad Liebman \textsuperscript{\rm 2}\footnotemark[1],
    Peter Stone \textsuperscript{\rm 1,\rm 3}
}
\affiliations {
    \textsuperscript{\rm 1} The University of Texas at Austin\\
    \textsuperscript{\rm 2} SparkCognition Research\\
    \textsuperscript{\rm 3} Sony AI\\
    caroline.l.wang@utexas.edu, ishand@cs.utexas.edu, \\ eliebman@sparkcognition.com, pstone@cs.utexas.edu
}

\begin{document}

\maketitle

\begin{abstract}
Current approaches to multi-agent cooperation rely heavily on centralized mechanisms or explicit communication protocols to ensure convergence. 
This paper studies the problem of distributed multi-agent learning without resorting to centralized components or explicit communication.
It examines the use of distribution matching to facilitate the coordination of independent agents.
In the proposed scheme, each agent independently minimizes the distribution mismatch to the corresponding component of a target visitation distribution.
The theoretical analysis shows that under certain conditions, each agent minimizing its individual distribution mismatch allows the convergence to the joint policy that generated the target distribution.
Further, if the target distribution is from a joint policy that optimizes a cooperative task, the optimal policy for a combination of this task reward and the distribution matching reward is the same joint policy.
This insight is used to formulate a practical algorithm (\textsc{dm$^2$}), in which each individual agent matches a target distribution derived from concurrently sampled trajectories from a joint expert policy.
Experimental validation on the StarCraft domain shows that combining (1) a task reward, and (2) a distribution matching reward for expert demonstrations for the same task, allows agents to outperform a naive distributed baseline.
Additional experiments probe the conditions under which expert demonstrations need to be sampled to obtain the learning benefits.
\end{abstract}

\section{Introduction}
\label{introduction}
Multi-agent reinforcement learning (MARL) \citep{littman1994markov} is a paradigm for learning agent policies that may interact with each other in cooperative or competitive settings \citep{silver2017go2, silver2018alphazero,barrett2012analysis, leibo2017multi}.
Training multiple agents at once is challenging, since an agent updating its own strategy induces a nonstationary environment for other agents, potentially leading to training instabilities, and offsetting any theoretical guarantees single agent RL algorithms confer.
To overcome these issues, agent policies can be set up as a single, centralized joint policy, be trained together but then deployed individually \citep{rashid18qmix, foerster18coma}, or be coordinated through some form of communication \citep{Lowe2017MultiAgentAF, jaques19socialinfluence, liu2021coachplayer}.

Fully distributed training of agent policies remains an open problem in MARL.
Distributed, or decentralized, training is desirable particularly in 
situations where parallelism, robustness, flexibility, or scalability is needed. Such settings include where there are a large number of agents, where agents are faced with changing environments \citep{marinescu2017prediction}, where agents must perform tasks in varying team configurations over their lifetime \citep{thrun1998lifelong}, or where ensuring privacy is a concern \citep{leaute2013protecting}.

This paper considers the setting of cooperative tasks involving $K$ agents, where the goal is to learn a high-performing joint policy in a fully distributed fashion.
To mitigate the limitations imposed by this setting, we propose \emph{distribution matching} to a target state-action distribution, as a strategy to induce coordination.
We assume that this distribution is associated with the execution of some joint policy, such as demonstrations of expert teams \citep{song18magail}, or from high-performing trajectories from the agents' past interactions \citep{hao2019independent}.
One tempting way to utilize this distribution is to assign each agent the distribution associated with its corresponding expert, and have the agent minimize the distribution mismatch to this target distribution over states and actions.

At first glance, this approach is fraught with complications.
Since the target distribution over states and actions is based on the execution of some joint policy, a single agent trying to adjust its policy
might not make meaningful progress on its own, given that other agents could change their behaviors at the same time. 
Second, this distributed approach to distribution matching could suffer from the same destabilization that causes distributed MARL to diverge \citep{HernandezLeal2019ASA, Yang2020AnOO}.

This paper shows that despite the above complications, individual distribution matching can be combined with maximization of shared task rewards to learn effectively.
In particular, the contributions of the paper are:
\begin{itemize}[noitemsep,topsep=0pt,leftmargin=*]
    \item Theoretical analysis showing that distributed distribution matching to the target distributions converges to the joint expert policy that generated the demonstrations.
    \item The \textsc{dm$^2$} algorithm, a practical method combining the distribution matching reward with the task reward. Experimental validation shows that if demonstrations are aligned with the shared objective, \textsc{dm$^2$} accelerates learning compared to a decentralized baseline learning with the task reward only.
    \item Ablations that empirically verify our assumption that the target distribution needs to be induced by demonstrations from coordinated policies, but do not necessarily need to be concurrently sampled.
\end{itemize}

\begin{figure}[tb] %
    \centering
    \includegraphics[width=0.8\linewidth]{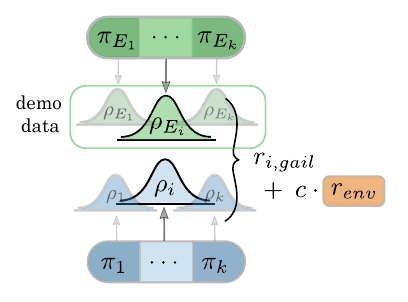}
    \caption{In \textsc{dm$^2$}, each agent $\pi_i$ independently learns from the sum of a distribution matching reward, $r_{i, gail}$, and a shared task reward, $r_{env}$. The distribution matching reward is computed by comparing the marginal state visitation distribution of the agent, $\rho_i$, with the state visitation distributions implied by the corresponding expert $\pi_{E_i}$'s demonstrations . 
    }
    \label{fig:method}
\end{figure}

\section{Related Work}
\label{section:related_work}
This section details related work, separated into work that relates to decentralized learning and that relates to distribution matching.

\paragraph{Cooperation in the Decentralized Setting:}
Many algorithms for multi-agent cooperation tasks require some degree of information sharing between agents.
 Centralized training decentralized execution (CTDE) methods use a single centralized critic that aggregates information during training, but is no longer required at execution time \citep{Lowe2017MultiAgentAF, sunehag18vdn, rashid18qmix, foerster18coma, yu2021surprising}.
 In practical implementations, agent networks often share parameters during training as well.

Rather than sharing model components, methods may also explicitly communicate  information between agents.
 Agents may be allowed to directly communicate information to each other  \citep{jaques19socialinfluence, li20matrpo, konan22midecmarl}.
 There might also be a central network that provides coordinating signals to all agents \citep{He2020LearningTC, liu2021coachplayer}.
 Knowledge of other agents' policies during training may also be assumed to limit the deviation of the joint policy \citep{wen2021multiagent}.
 
This work studies the fully decentralized setting without communication or shared model components. To our knowledge, relatively few works consider this setting.
Early work analyzed simple cases where two agents with similar but distinct goals could cooperate for mutual benefit under a rationality assumption \citep{Rosenschein1989CommunicationFreeIA, Genesereth1986CooperationWC}.
More recently, in the ALAN system for multi-agent navigation \citep{Godoy2018ALANAL}, agents learn via a multi-armed bandits method that does not require any communication.
\citet{Jiang2021OfflineDM} study the decentralized multi-agent cooperation in the \textit{offline} setting---in which each agent can only learn from its own data set of pre-collected behavior without communication---and propose a learning technique that relies on value and transition function error correction.

\paragraph{Distribution Matching in MARL:}
\citet{ho2016generative} originally proposed adversarial distribution matching  as a way to perform imitation learning in the single agent setting (the \textsc{gail} algorithm).
\citet{song18magail} extend \textsc{gail} to the multi-agent setting in certain respects. Their analysis sets up independent imitation learning as searching for a Nash equilibrium, and assumes that a unique equilibrium exists. Their experiments focus on training the agent policies in the CTDE paradigm, rather than the fully distributed setting.
This work instead leverages recent single-agent \textsc{gail} convergence theory \citep{Guan2021WhenWG} to demonstrate convergence to the joint expert policy, and performs experiments with distributed learning.
\citet{Wang21MAILCopulas} study MARL using copula functions to explicitly model the dependence between marginal agent policies for multi-agent imitation learning.
\citet{durugkar2020balancing} and \citet{Radke2022ExploringTB} show that  balancing individual preferences (such as matching the state-action visitation distribution of some strategies) with the shared task reward can accelerate progress on the shared task. 
In contrast to these works, the goal of this paper is not to study imitation learning, but rather to study how distribution matching by independent agents can enhance performance in cooperative tasks.

Perhaps most closely related to this work, \citet{hao2019independent} use self-imitation learning (\textsc{sil}) \citep{oh2018SIL} to encourage agents to repeat actions that led to high returns in the past.
The above approach can be considered as a special case of the setting this paper studies, where the target distribution can be non-stationary, and is generated by the agents themselves.
This paper further presents a theoretical analysis, showing that in the case where the target distribution is generated by demonstrations (and is therefore stationary), each agent attempting to minimize mismatch to their individual target distributions leads to convergence to the joint target policy.
Due to the non-stationary nature of the target distribution in \textsc{sil}, similar guarantees cannot be obtained.

\section{Background}
\label{section:background}
This section describes the problem setup for MARL, imitation learning, and distribution matching.

\paragraph{Markov games:}
A Markov game \citep{littman1994markov} or a stochastic game \citep{Gardner1983GameTheory} with $K$ agents is defined as a tuple $\langle K, \mathcal{S}, \bm{\mathcal{A}}, \rho_0, \mathcal{T}, \bm{R}, \gamma \rangle$, where $\mathcal{S}$ is the set of states, and $\bm{\mathcal{A}}\equiv \mathcal{A}^K$ is the product of the set of actions $\mathcal{A}$ available to each agent.
The initial state distribution is described by $\rho_0: \mathcal{S} \mapsto \Delta(\mathcal{S})$, where $\Delta(\cdot)$ indicates a distribution over the corresponding set.
The transitions between states are controlled by the transition distribution $\mathcal{T}: \mathcal{S} \times \mathcal{A}_0 \times \mathcal{A}_1 \times \ldots \times \mathcal{A}_{K-1} \longmapsto \Delta(\mathcal{S})$.
Each agent $i$ acts according to a parameterized policy $\pi_{\theta_i}: \mathcal{S} \mapsto \Delta(\mathcal{A}_i)$, and the joint policy $\bm{\pi}_\theta = [ \pi_{\theta_1},  \cdots, \pi_{\theta_K} ]$ is the vector of the individual agent policies.
Occasionally, the policy parameters $\theta$ are omitted for convenience.
Note that each agent observes the full state. 
We use subscript ${i-}$ to refer to all agents except $i$, i.e., $\pi_{i-}$  refers to the agent policies, $\{\pi_0, \ldots, \pi_{i-1}, \pi_{i+1}, \ldots, \pi_{K-1}\}$.

Each agent $i$ is also associated with a reward function $R_i: \mathcal{S} \times \mathcal{A}_0 \times \ldots \times \mathcal{A}_{K-1} \longmapsto \mathbb{R}$.
The agent aims to maximize its expected return $ \mathbb{E}_{\bm{\pi}}[\sum_{t=0}^{\infty} \gamma^t r_{i, t}]$, where $r_{i, t}$ is the reward received by agent $i$ at time step $t$, and the discount factor $\gamma \in [0, 1)$ specifies how much to discount future rewards.
In the cooperative tasks considered by this paper, the task rewards are identical across agents.

In Markov games, the optimal policy of an agent depends on the policies of the other agents.
The \textit{best response} policy is the best policy an agent can adopt, given the other agent's policies $\pi_i^* = \argmax_{\pi_i} \mathbb{E}_{\pi_i, \pi_{i-}}[\sum_{t=0}^{\infty} \gamma^t r_{i, t}]$.
If no agent can unilaterally change its policy without reducing its return, then the policies are considered to be in a \textit{Nash equilibrium}.
That is, $\forall i \in [0, K-1], \forall \hat{\pi}_i \neq \pi_i, \mathbb{E}_{\pi_i, \pi_{i-}}[\sum_{t=0}^{\infty} \gamma^t r_{i, t}] \geq \mathbb{E}_{\hat{\pi}_i, \pi_{i-}}[\sum_{t=0}^{\infty} \gamma^t r_{i, t}] $.

The theoretical analysis in Section \ref{section:theory} deals with the above fully observable setting, and assumes a discrete and finite state and action space.
However, the experiments are conducted in partially observable MDPs (POMDPs) with continuous states, which can be formalized as Dec-POMDPs in the multi-agent setting \citep{Oliehoek2012DecPOMDP}.
Dec-POMDPs include two additional elements: the set of observations $\Omega$ and each agent's observation function $O_i: \mathcal{S} \longmapsto \Delta(\Omega)$.

\paragraph{Distribution matching and imitation learning:}
Imitation learning \citep{bakker1996robot, ross2011reduction, schaal1997learning} is a problem setting where an agent tries to mimic trajectories $\{\xi_0, \xi_1, \ldots\}$ where each trajectory $\xi = \{(s_0, a_0), (s_1, a_1), \ldots\}$ is demonstrated by an expert policy $\pi_{E}$. 
Various methods have been proposed to address the imitation learning problem.
Behavioral cloning \citep{bain1995framework} 
applies supervised learning to expert demonstrations to recover the maximum likelihood policy.
Inverse reinforcement learning (IRL) \citep{ng2000algorithms} recovers a reward function which can then be used to learn the expert policy using reinforcement learning.
To do so, $\mathtt{IRL}(\pi_E)$ aims to recover a reward function under which the trajectories demonstrated by $\pi_E$ are optimal.

\citet{ho2016generative} formulate imitation learning as a distribution matching problem and propose the \textsc{gail} algorithm. 
Let the state-action visitation distribution of a joint policy $\bm{\pi} = \langle \pi_1, \ldots \pi_K \rangle$ be:
\begin{equation*} \label{eq:joint-visitation}
    \rho_{\bm{\pi}}(s, \bm{a}) := (1 - \gamma) \prod_{i=1}^K \pi_i(a_i|s) \sum_{t=0}^\infty \gamma^t P(s_t=s | \bm(\pi)).
\end{equation*}
In a multi-agent setting, for agent $i$, 
\begin{equation*}
    \rho_{\pi_i, \pi_{i-}}(s, a) := (1 - \gamma) \pi_i(a|s) \sum_{t=0}^\infty \gamma^t p(s_t=s|\pi_i, \pi_{i-})
\end{equation*}
refers to the marginal state-action visitation distribution of  agent $i$'s policy $\pi_i$, given the other agents' policies $\pi_{i-}$.
In the single agent setting, a policy that minimizes the mismatch of its state-action visitation distribution to the one induced by the expert's trajectories and maximizes its causal entropy $H(\pi)$ is a solution to the $\mathtt{RL} \circ \mathtt{IRL}( \pi_E)$ problem \citep{ho2016generative}.
That is, distribution matching is a solution to the imitation learning problem.

\citet{Guan2021WhenWG} showed that in the single-agent case, the \textsc{gail} algorithm  converges to the expert policy under a variety of policy gradient techniques, including TRPO \cite{schulman2015trpo}.
Let $r_\phi$ be a reward function (based on a discriminator) parameterized by $\phi$, and let $\psi(\phi)$ be a convex regularizer.
\citet{Guan2021WhenWG} formulate the \textsc{gail} problem as the following min-max problem: 
\begin{align}
    \min_{\theta} \max_{\phi} \mathcal{L}(\theta, \phi) \label{eqn:gail_obj} \\
    \text{s.t.  } \mathcal{L}(\theta, \phi) &:=  V({\pi_E}, r_\phi) - V({\pi_\theta}, r_\phi) - \psi(\phi) \nonumber
\end{align}
where $V(\pi, r) = \E_{s_0 \sim \rho_0} \E_\pi [\sum_{t=0}^\infty \gamma^t r_{i, t}]$ is the expected return from some start state when following policy $\pi$ and using reward function $r$.

In the multi-agent setting, imitation learning has the added complexity that the expert trajectories are generated by the interaction of multiple expert policies $\langle \pi_{E_0}, \ldots, \pi_{E_K} \rangle$.
Successful imitation in this setting thus involves the coordination of all $K$ agents' policies.

\section{Theoretical Analysis}
\label{section:theory}

This section provides theoretical grounding for the core proposition of this paper.
The target distribution is assumed to be the empirical distribution of demonstrations from a set of ``expert'' agents in order to ensure that it is achievable by the agents.
Under the conditions stated below, the analysis shows that if $K$ agents independently minimize the distribution mismatch to their respective demonstrations in a turn-by-turn fashion, then agent policies will converge to the joint expert policy. 

The three conditions are as follows.
First, this joint expert policy needs to be coordinated,\footnote{Condition is made concrete in Section \ref{section:architecture}.} but does not have to be a Nash equilibrium with respect to any particular task.
Second, for every policy considered, there is a minimal probability of visiting each state.
Third, each agent learns via a single-agent imitation learning algorithm such that it improves its distribution matching reward at each step.

Next, we establish that if the agents are learning to maximize the mixture of an extrinsic task reward and a distribution matching reward, then the agent policies will converge to a Nash equilibrium with respect to the joint reward.

\subsection{Convergence of Independent \textsc{gail} Learners}

This analysis considers the setting where each agent $i$ performs independent learning updates according to the \textsc{gail} algorithm, to match the visitation distribution of the $i^\text{th}$ expert.
It proposes a condition on an individual agent's \textsc{gail} objective improvement.
If this condition is satisfied, it shows that a lower bound on a joint distribution matching objective is improved.
Further, the lower bound objective converges, demonstrating the convergence of independent \textsc{gail}.

Let the parameterized (discriminator) reward of agent $i$ be $r_{\phi_i}: \mathcal{S} \times \mathcal{A}_i \rightarrow \mathbb{R}$, for $\phi_i \in \Phi$.
At each agent's update, all the other agent policies are held fixed, and the corresponding discriminator has converged to $r_{\phi_i}^{opt}$, where $\phi_i^{opt} \in \argmax_{\phi_i \in \Phi}\mathcal{L}(\theta_i, \phi_i | \theta_{i-})$. \citet{Guan2021WhenWG} showed that the learning process of a single agent repeatedly updating $\phi_i$ converges to $\phi_i^{opt}$.
The update scheme we consider for theoretical purposes is specified in Algorithm \ref{alg_theory}, located in Appendix \ref{app:proofs}. \footnote{A technical appendix is included in the arXiv version of this paper, \texttt{https://arxiv.org/abs/2206.00233}.}

Define the per-agent \textsc{gail} loss as follows: 
\begin{align*}
    \mathcal{L}(\theta_i, \phi_i | \theta_{i-}) &:=  V({\pi_{E_i}}, r^{opt}_{\phi_i} | \pi_{E_{i-}}) \\& - V({\pi_{\theta_i}}, r^{opt}_{\phi_i} | \pi_{\theta_{i-}}) - \psi({\phi_i})  \\
    \text{s.t.    } V(\pi_i, r^{opt}_{\phi_i} | \pi_{i-}) &:= 
    \frac{1}{1 - \gamma} \E_{s \sim \rho_{\bm{\pi}}, a_i \sim \pi_i} \left[r^{opt}_{\phi_i}(s, a_i) \right]
\end{align*}
where $\rho_{\bm{\pi}}(s) = (1 - \gamma)\sum_{t=0}^\infty \gamma^t P(s_t=s | \bm{\pi}) $ is the discounted \textit{state} visitation distribution.

Consider the random variable that is the indicator function $\mathds{1}_{a_j = a^E_j}(s)$ for the event that at state $s$, agent $j$ would take an action that matched expert $j$'s action. Note that the expectation of this indicator is the probability of matching the expert's action\footnote{For the purpose of exposition, assume that the expert policy is deterministic. The theory in this section can be extended to the case where $\bm{\pi}_E$ is stochastic by comparing the distributions over actions.}.
Define the \textbf{joint action-matching objective} as the probability that agent actions match their corresponding experts (plus a constant), weighted by the probability of visiting states:
\begin{equation}
\label{eqn:objective}
J(\bm{\pi}) = \sum_{s \in S} \rho_{\pi}(s) \left[ (K-1) + \E_{\bm{a} \sim \bm{\pi}}  [\mathds{1}_{\bigcap_i a_i = a^E_i}(s)]\right].
\end{equation}
where $\mathds{1}_{\bigcap_i a_i = a^E_i}(s)$ indicates the event that \emph{all} agents take actions that match their corresponding experts. Maximizing $J(\bm{\pi})$ precisely corresponds to solving the multi-agent imitation learning problem because the joint expert policy $\pi_E$ is the unique maximizer of $J(\bm{\pi})$ (Lemma \ref{lemma:jpi_maximizer}, Appendix \ref{app:proofs}).

\begin{restatable}[action-matching objective]{theorem}{thmamobjective}
\label{theorem_amo}
The joint action-matching objective $J(\bm{\pi})$ is lower bounded by the following sum over individual action-matching rewards $\mathds{1}_{a_i = a_i^E(s)}$:
\begin{equation}
    L(\bm{\pi}) :=  \sum_{s \in S} \rho_\pi(s) \left[ \sum_{i=1}^K  \E_{a_i \sim \pi_i} [\mathds{1}_{a_i = a_i^E}(s)] \right].
\end{equation}
\end{restatable}
When an agent updates its policy to optimize its component of $L(\bm{\pi})$, the state visitation distribution $\rho$ might change such that the expected action rewards for other agents decrease.
The next corollary introduces a lower bound on $L(\bm{\pi})$ that is independent of the state visitation distribution $\rho$.
\begin{restatable}[lower bound]{corollary}{corlwrbound}
\label{corr_convergence}
Let $\epsilon$ be the minimum probability of visiting any state. For all $\rho$, $L(\bm{\pi})$ is lower bounded by $L_\epsilon(\bm{\pi})$:
\begin{equation}
    L(\bm{\pi}) > \sum_{s \in S} \sum_{i=1}^K \epsilon \left[ \E_{a_i \sim \pi_i} [\mathds{1}_{a_i = a_i^E}(s) ] \right] =: L_\epsilon({\bm{\pi}}).
\end{equation}
\end{restatable}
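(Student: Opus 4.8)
The plan is to reduce the corollary to an elementary term-by-term comparison of the two sums, using exactly two ingredients: the non-negativity of the per-agent rewards guaranteed by Assumption~\ref{assump:action_match}, and the uniform visitation floor $\rho(s) \ge \epsilon$ supplied by the minimal-visitation assumption stated at the start of the section.

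First I would make the per-agent reward contribution explicit. Substituting $r_{\phi_i}(s,a) = c\,\mathbbm{1}_{a_i = a^E_i}(s)$ from Assumption~\ref{assump:action_match}, with $c \in \mathbb{R}^+$, gives
\begin{equation*}
\frac1c \E_{a \sim \pi_i}\!\left[r_{\phi_i}(s,a)\right] = \Pr_{a \sim \pi_i}\!\left[a = a^E_i \mid s\right] \in [0,1].
\end{equation*}
The point of this step is to observe that every summand, and hence the whole bracket $\sum_{i=1}^K \frac1c \E_{a\sim\pi_i}[r_{\phi_i}(s,a)]$, is non-negative at every state $s$.

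Next I would carry out the state-by-state bound. Since each bracketed quantity is non-negative, replacing the weight $\rho(s)$ by the smaller value $\epsilon$ can only decrease each term; summing over $s \in S$ then yields $L(\bm{\pi}) \ge L_\epsilon(\bm{\pi})$ directly from the definition of $L$. To upgrade this to the strict inequality claimed, I would note that $\rho$ is a probability distribution, so $\sum_{s} \rho(s) = 1 > |S|\epsilon = \sum_{s} \epsilon$, which forces $\rho(s^\star) > \epsilon$ for at least one state $s^\star$; the corresponding term is then strictly larger whenever the bracket is positive there.

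The step I expect to be the crux is getting the direction of the substitution right, which rests entirely on reward non-negativity: if Assumption~\ref{assump:action_match} did not force $r_{\phi_i} \ge 0$, replacing $\rho(s)$ by its floor $\epsilon$ could raise rather than lower the value, and the bound would fail. The only other delicate point is strictness, which requires that the reward contribution not vanish on every state where $\rho$ strictly exceeds $\epsilon$; this holds unless the policies match no expert with positive probability on precisely those states, a degenerate case excluded by the assumptions in force.
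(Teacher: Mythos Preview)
Your proposal is correct and follows the same approach as the paper: replace $\rho(s)$ by its floor $\epsilon$ term-by-term, which is valid because each bracketed summand is non-negative under Assumption~\ref{assump:action_match}. The paper's own proof is a single sentence invoking the definition of $L(\bm{\pi})$ and the visitation floor; your argument simply spells out the non-negativity step and the strictness consideration that the paper leaves implicit.
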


With the lower bound, $L_\epsilon$ we make the following assumption to relate our action-matching reward to the GAIL discriminator reward that is improved by the GAIL algorithm.

\begin{restatable}[action-matching reward]{assumption}{assumpaction}
\label{assump:action_match}
For all agents $i$ and all states $s$, an increase in the expected converged GAIL discriminator reward implies an increase to the expected action-matching reward function:
\begin{align*}
     \E_{a_i \sim \pi_i^{t+1}}[ r_{\phi_i}(s, a_i) ] &>  
    \E_{ a_i \sim \pi_i^t}  [  r_{\phi_i}(s, a_i) ] \\ 
    \implies \E_{a_i \sim \pi_i^{t+1}}[ \mathds{1}_{a_i = a^E_i}(s) ] &>  
    \E_{a_i \sim \pi_i^t}  [\mathds{1}_{a_i = a^E_i}(s)].
\end{align*}
\end{restatable}

If the reward $r_{\phi_i}(s, a) = - \log D(s, a)$, as it is in \textsc{gail}, then the assumption above is valid (see Appendix \ref{app:proofs}).

Assumption \ref{assump:action_match} ensures each agent updating its policy leads to improvement in $L_\epsilon(\bm{\pi})$.
This $L_\epsilon(\bm{\pi})$ is a lower bound on the actual objective of interest $J(\bm{\pi})$---by Theorem \ref{theorem_amo} and Corollary \ref{corr_convergence}.
Further, $L_\epsilon(\bm{\pi})$ has a unique global maximizer, which is  $\bm{\pi} = \bm{\pi}_E$ (Lemma \ref{lem:maximizer}).
Thus, while the action reward for the other agents $r_{\phi_j}$ might decrease in the short term, the joint action matching objective across all agents will increase as the learning process continues.
Since $J(\bm{\pi})$ is bounded from above (Lemma \ref{lemma:jpi_bounded}), this process of improving the lower bound will converge to the optimal policy for this objective---the joint expert policy.

\begin{restatable}[convergence]{theorem}{thmconvergence}
\label{theorem_convergence}
Each agent maximizing its individual return over the individual action rewards $r_{\phi_i}$ will converge to the joint expert policy $\bm{\pi}_E$.
\end{restatable} 

\subsection{Multi-agent Learning with Mixed Task and Imitation Reward}

Lemma \ref{lemma:jpi_maximizer} in Appendix \ref{app:proofs} shows that the joint expert policy uniquely maximizes the joint imitation learning objective.
Let $\phi_i^{opt, E_i}$ be the optimal discriminator parameters for the $i^\text{th}$ expert, $\pi_{E_i}$.
From a game theoretic perspective, this lemma implies that these expert policies are a Nash equilibrium for the imitating agents with respect to $r^{opt}_{\phi_i, E_i}$.

Next, note that in imitation learning, it is typically not necessary for the agents to know what the demonstration  actor's task reward is. 
However, suppose that the agents have access to both demonstrations from  policies optimal at task $T$, and the corresponding reward function $R_T$.

Let $R_{I, i} = r^{opt}_{\phi_i, E_i}$, and let the expert policies maximize $R_T$. The expert policies that maximize $R_T$ are in a Nash equilibrium with respect to $R_T$. Theorem \ref{theorem_mixed} states that if the agents are trained to maximize a reward function that is a linear combination of the task reward $R_T$ and $R_{I, i}$, then the converged agent policies are also in a Nash equilibrium with respect to $R_T$. 

\begin{restatable}[]{theorem}{thmmixed}
\label{theorem_mixed}
Let $R_T$ be the reward function used to train the expert policies $\bm{\pi}_{E}$, and let the expert policies have converged with respect to $R_T$ (i.e., they are in a Nash equilibrium with respect to reward $R_T$).
Then $\bm{\pi}_{E}$ are a Nash equilibrium for reward functions of the form, $\alpha R_T + \beta R_{I, i}$, for any  $\alpha, \beta > 0 $.
\end{restatable}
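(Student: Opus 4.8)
The plan is to reduce the claim to the elementary fact that a policy which simultaneously maximizes two objectives also maximizes every positive linear combination of them, applied separately to each agent's best-response problem. Concretely, I would fix an agent $i$, hold the remaining agents at their expert policies $\bm{\pi}_{E_{i-}}$, and work with the per-agent return $V(\pi_i, r \mid \bm{\pi}_{E_{i-}}) = \E_{\pi_i, \bm{\pi}_{E_{i-}}}[\sum_{t=0}^\infty \gamma^t r_{i,t}]$, which is exactly the quantity appearing in the paper's Nash equilibrium definition. The goal is to establish two separate optimality statements for $\pi_{E_i}$ against the fixed opponents $\bm{\pi}_{E_{i-}}$: one with respect to the task reward $R_T$, and one with respect to the imitation reward $R_{I,i} = r^{opt}_{\phi_i, E_i}$.

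The first statement is immediate from the hypothesis: since $\bm{\pi}_E$ is assumed to be a Nash equilibrium for $R_T$, agent $i$ cannot improve its $R_T$-return by deviating, so $V(\pi_{E_i}, R_T \mid \bm{\pi}_{E_{i-}}) \geq V(\hat{\pi}_i, R_T \mid \bm{\pi}_{E_{i-}})$ for every $\hat{\pi}_i$. The second follows from the game-theoretic reading of Lemma \ref{lemma:jpi_maximizer} noted in the preceding discussion, namely that the expert policies form a Nash equilibrium with respect to $R_{I,i}$, giving $V(\pi_{E_i}, R_{I,i} \mid \bm{\pi}_{E_{i-}}) \geq V(\hat{\pi}_i, R_{I,i} \mid \bm{\pi}_{E_{i-}})$ for every $\hat{\pi}_i$. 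The cleanest way I would justify this second inequality directly, under Assumption \ref{assump:action_match}, is to note that $R_{I,i}(s, a_i) = c\,\mathbbm{1}_{a_i = a^E_i}(s)$ is bounded above by $c$ at every state and attains this bound exactly when agent $i$ matches expert $i$; hence playing $\pi_{E_i}$ yields the pointwise-maximal instantaneous reward $c$ at all states, independently of the induced state visitation distribution, so no deviation can increase the imitation return.

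With both best-response inequalities established at the common joint policy $\bm{\pi}_E$, I would invoke linearity of the return in the reward, $V(\pi_i, \alpha R_T + \beta R_{I,i} \mid \bm{\pi}_{E_{i-}}) = \alpha\, V(\pi_i, R_T \mid \bm{\pi}_{E_{i-}}) + \beta\, V(\pi_i, R_{I,i} \mid \bm{\pi}_{E_{i-}})$, which holds by linearity of expectation over the discounted trajectory. Multiplying the two inequalities by the positive scalars $\alpha$ and $\beta$ and adding them yields $V(\pi_{E_i}, \alpha R_T + \beta R_{I,i} \mid \bm{\pi}_{E_{i-}}) \geq V(\hat{\pi}_i, \alpha R_T + \beta R_{I,i} \mid \bm{\pi}_{E_{i-}})$ for every deviation $\hat{\pi}_i$. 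Since $i$ was arbitrary, no agent can unilaterally improve its combined-reward return, which is precisely the statement that $\bm{\pi}_E$ is a Nash equilibrium for $\alpha R_T + \beta R_{I,i}$.

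I do not anticipate a serious obstacle, as the argument is structurally the one-line fact that a simultaneous maximizer of two objectives maximizes every positive combination; the only step demanding care is the second best-response condition. One must ensure it is a genuine per-agent statement (agent $i$'s imitation return is maximized by $\pi_{E_i}$ when the others are fixed at expert) rather than merely the global claim that $\bm{\pi}_E$ maximizes the joint objective $J(\bm{\pi})$. The pointwise-boundedness argument above settles exactly this subtlety, after which the remaining steps are routine.
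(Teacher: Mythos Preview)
Your proposal is correct and follows essentially the same approach as the paper's proof: establish the two per-agent best-response inequalities (one for $R_T$ from the hypothesis, one for $R_{I,i}$ from the preceding discussion of Lemma~\ref{lemma:jpi_maximizer}), then combine them via linearity of the return and positivity of $\alpha,\beta$. The paper phrases this as a contradiction argument rather than a direct one, and simply asserts the $R_{I,i}$ inequality ``by definition'' where you supply the more explicit pointwise-boundedness justification under Assumption~\ref{assump:action_match}; these are cosmetic differences, not substantive ones.
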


Theorem \ref{theorem_convergence} does not require that the demonstrations originate from optimal policies for some task.
However, Theorem \ref{theorem_mixed} implies that if the demonstrations \emph{do} maximize the reward of a desired task, then the task reward and distribution matching reward can be combined to optimize the same task. 
The proposed algorithm, \textsc{dm$^2$}, takes this approach.

\section{Methods}
\label{section:architecture}

This section discusses practical considerations of fully distributed multi-agent distribution matching, and proposes \textsc{dm$^2$}, an algorithm whose performance is analyzed in Section \ref{section:experiments}.

\subsection{Generating Expert Demonstrations}

Section \ref{section:theory} shows that agents individually following demonstrations from an existing joint policy can converge to said joint policy without centralized training or communication.
In practice, these demonstrations should imply an \emph{achievable} joint expert policy.

For illustration, consider a four tile gridworld, where only one agent is allowed on a tile at a time. Let one of the tiles be labelled, ``A". 
Suppose there are two agents, and each agent $i$ is provided with a separate target state-action distribution, consisting of agent $i$ occupying a tile ``A", and the other agent occupying one of the three remaining tiles.
If both agents simultaneously attempt to match their provided target distributions, then both agents will attempt to occupy tile ``A".
With these demonstrations, it is impossible for both agents to fully match their desired distributions.

The example above shows that for each agent to completely match its desired distribution, the state-action distributions for all agents must be compatible in some way.
This notion of compatibility is defined below.

\begin{definition}[Compatible demonstrations]
\label{def:compatible}
 State-action visitation distributions $\rho_{\pi_i, \hat{\pi}_{i-}}$ from a collection of $K$ policies $\{\pi_i\}_{i=1}^K$ (where $\hat{\pi}_{i-}$ are the other agent policies executed with $\pi_i$ to obtain the state-action visitation distribution $\rho_{\pi_i, \hat{\pi}_{i-}}$) are \emph{compatible} if for all $i$, $s \in \mathcal{S}, a \in \mathcal{A}$, there exists a joint policy $\bm{\pi}' = \langle \pi'_1, \ldots, \pi'_K \rangle$  with the joint state-action visitation distribution $\rho_{\bm{\pi}}(s, \bm{a})$ (Equation \ref{eq:joint-visitation}) such that the marginal state-action visitation distribution for agent $i$ is: 
\begin{align*}
     \rho_{\pi'_i, \pi'_{i-}}(s, a) &:= (1 - \gamma) \pi'_i(a | s)  \sum_{t=0}^\infty \gamma^t P(s_t=s | \pi'_i, \pi'_{i-}) \\
     &\;= \rho_{\pi_i, \hat{\pi}_{i-}}(s, a).
\end{align*}
\end{definition}

Observe that $K$ expert policies that are trained in the same environment to perform a task induce compatible individual state-action visitation distributions, providing a practical method  to obtain compatible demonstrations. 

\subsection{Practical Multi-Agent Distribution Matching}

\textsc{dm$^2$} is inspired by the theoretical analysis in Section \ref{section:theory}, and balances the individual objective of distribution matching with the shared task.
To do so, the agents are provided a mixed reward: part cost function for minimizing individual distribution mismatch, part environment reward.
This approach has been shown to be effective in balancing individual preferences with shared objectives in MARL \citep{durugkar2020balancing, cui2021scalable}.
The individual agent policies are learned by independently updating each agent's policy using an on-policy RL algorithm of choice.
The experiments here use \textsc{ppo} \citep{schulman2017proximal}.

In the experiments, the demonstrations used as targets for the distribution matching are compatible, state-only trajectories---i.e., originating from policies trained jointly on the task of interest.
The use of state-only demonstrations enables learning purely from observations of other agents (e.g. online videos), and is supported by research in the subarea of imitation from observation alone \citep{torabi_generative_2019}. Our experiments also validate the effectiveness of the approach in this setting.
In Section \ref{section:experiments}, we show that the demonstrator policies may possess intermediate competency in the task at hand, and that the demonstrations do not need to be jointly sampled for all the agents.
The proposed learning scheme for training individual agents is summarized by Algorithm \ref{alg} and Figure \ref{fig:method}.

\begin{algorithm2e}[t]
\caption{\textsc{dm$^2$} (\textbf{D}ecentralized \textbf{M}ARL via \textbf{d}istribution \textbf{m}atching)}
\label{alg}
\SetAlgoLined
\KwIn{Number of agents $K$, expert demonstrations $\mathcal{D}_0, \ldots, \mathcal{D}_K$, environment $env$, number of epochs $N$, number of time-steps per epoch $M$, reward mixture coefficient $c$}
\For{$k=0, \ldots, K-1$}{
Initialize discriminator parameters $\phi_k$\;
Initialize policy parameters $\theta_k$\;
}
\For{$n = 0, 1, \ldots , N - 1$}{
Gather $m=1, \ldots, M$ steps of data $(s^m, \bm{a}^m, r^m_{env})$ from $env$\;
\For{$k=0, \ldots, K-1$}{
Sample $M$ states from demonstration $\mathcal{D}_k$\;
Update discriminator $D_{\phi}^k$\;
Compute \textsc{gail} reward $r^m_{k, \textsc{gail}} = - \log D_{k, \phi}(s^m)$ for all $M$ demonstration states\;
Set agent reward $r^m_{k, mix} = r^m_{env} + r^m_{k, \textsc{gail}} * c$\;
Update agent policy $\pi_{\theta}^k$ with data $(s_m, \bm{a}_m, r^m_{k, mix})$ for $m=1, \ldots, M$\;
}
}
\KwOut{$K$ agent policies $\bm{\pi}_\theta$}
\end{algorithm2e}

\section{Experimental Evaluation}
\label{section:experiments}
\begin{figure*}[t!]
\centering
    \begin{subfigure}[]{0.29\linewidth}
        \centering
        \label{subfig:5v6_core_results}
        \includegraphics[width=\linewidth]{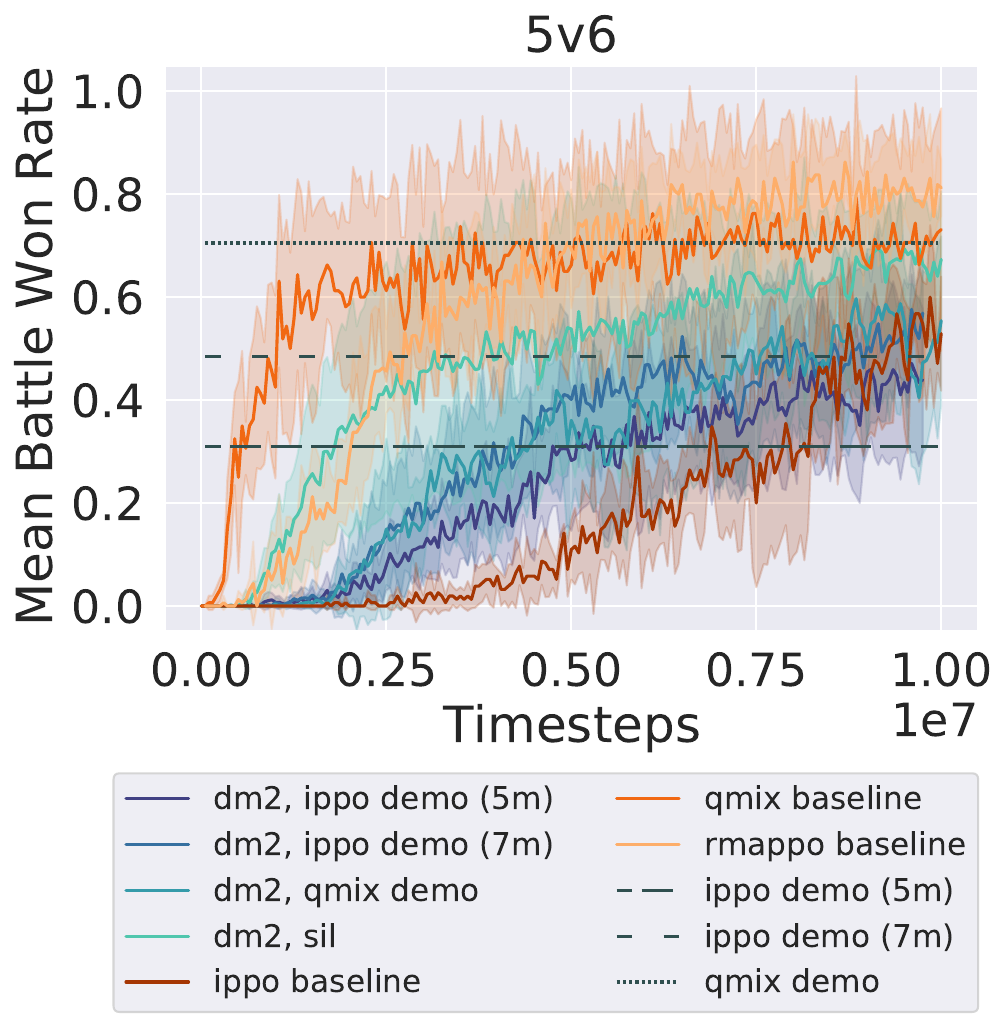}
    \end{subfigure} 
    \hspace{0.025cm}
    \begin{subfigure}[]{0.29\linewidth}
        \centering
        \label{subfig:3sv4z_core_results}
        \includegraphics[width=\linewidth]{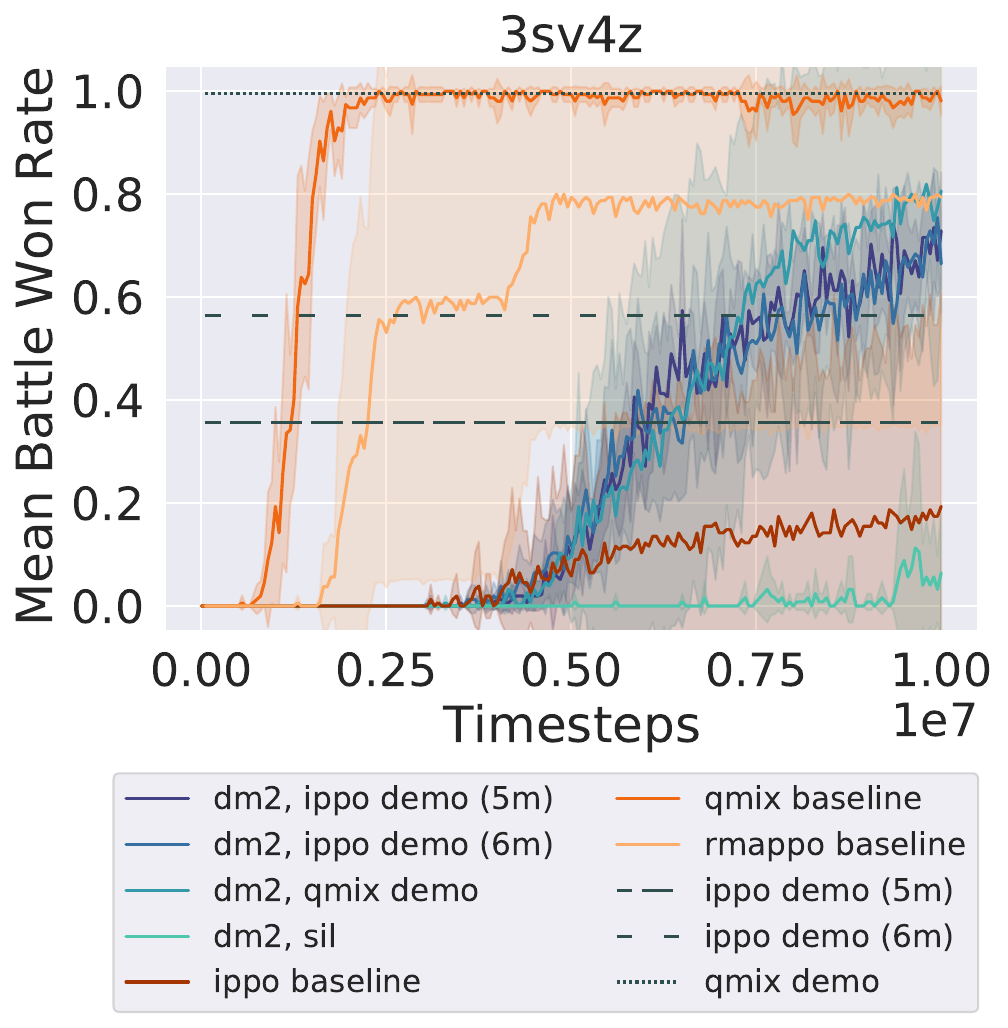}
    \end{subfigure}
    \hspace{0.025cm}
    \begin{subfigure}[]{0.30\linewidth}
        \centering
        \label{subfig:3sv3z_core_results}
        \includegraphics[width=\linewidth]{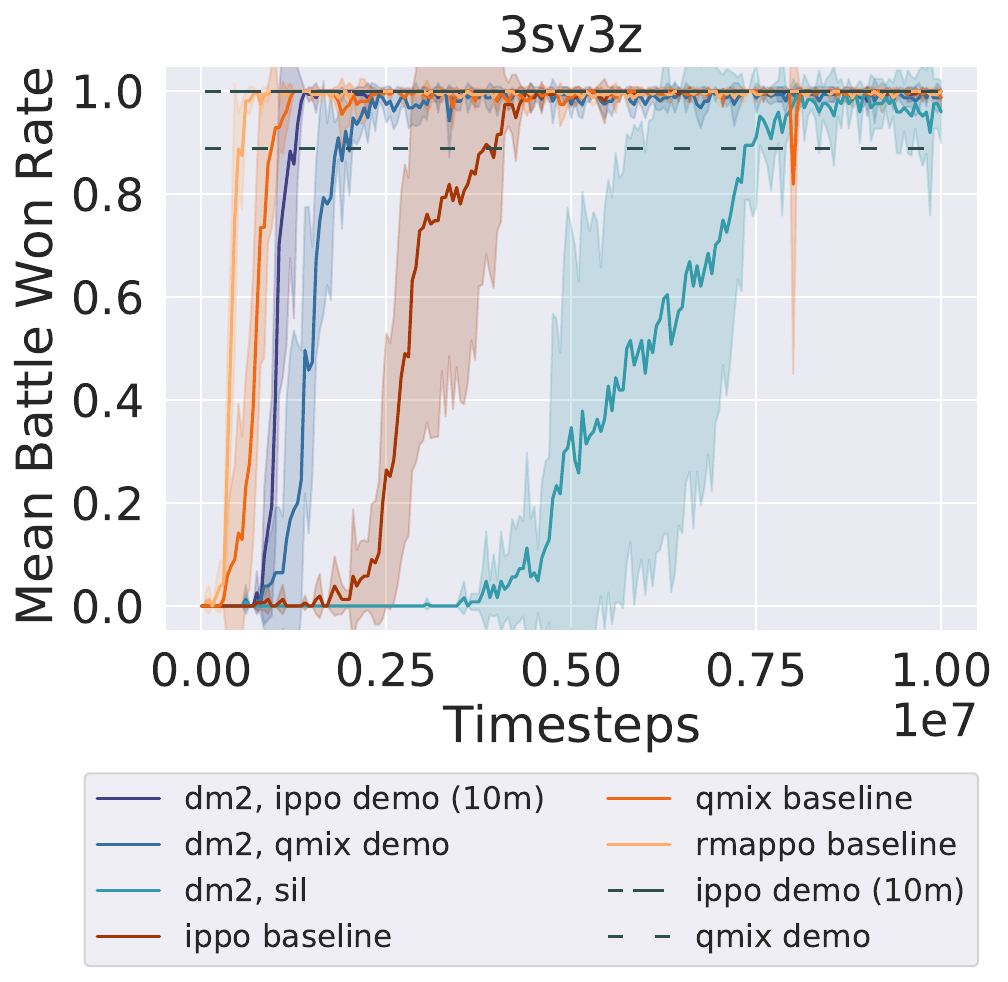}
    \end{subfigure}
\caption{Learning curves of \textsc{dm$^2$} (ours), compared to \textsc{ippo}, \textsc{rmappo}, and \textsc{qmix} baselines on the 5v6, 3sv4z, and 3s3z tasks. \textsc{dm$^2$} is trained with demonstrations from \textsc{ippo} and \textsc{qmix} experts. IPPO demonstrations are sampled from varying points in the training process of the IPPO experts, and therefore vary in quality. A variation with \textsc{sil} is also shown.}
\label{fig:core_results}
\end{figure*}

This section presents two main experiments. The first experiment evaluates whether \textsc{dm$^2$} may improve coordination--and therefore efficiency of learning--over a decentralized MARL baseline.
A comparison against CTDE algorithms is also performed.
The second experiment is an ablation study on the demonstrations that are provided to \textsc{dm$^2$}.
These ablations seek to answer the question whether the faster, improved learning above is due to coordination between experts, or due to individual experts being competent.

Additional experiments in Appendix \ref{app:exp_results} evaluate the effect of the demonstration quality on learning, analyze the usage of demonstrations for behavioral cloning \citep{bain1995framework} instead of distribution matching, and examine the impact of using only distribution matching \textsc{gail} rewards for learning instead of mixing them with the task rewards. The code is provided at \texttt{https://github.com/carolinewang01/dm2}.

\paragraph{Environments:}
Experiments were conducted on the StarCraft Multi-Agent Challenge domain \citep{samvelyan19smac}.
It features cooperative tasks where a team of controllable allied agents must defeat a team of enemy agents.
The enemy agents are controlled by a fixed AI.
The battle is won and the episode terminates if the allies can defeat all enemy agents.
The allies each receive a team reward every time an enemy agent is killed, and when the battle is won. 
StarCraft is a partially observable domain, where an allied agent can observe features about itself, as well as allies and enemies within a fixed radius. 
The specific StarCraft tasks used here (with two additional tasks in Appendix \ref{app:exp_results}) are:
\begin{itemize}[noitemsep,topsep=0pt,leftmargin=*]
    \item 5v6: 5 Marines (allies) and 6 Marines (enemies)
    \item 3sv4z: 3 Stalkers (allies) and 4 Zealots (enemies)
    \item 3sv3z: 3 Stalkers (allies) and 3 Zealots (enemies)
\end{itemize}
\paragraph{Baselines:} 
\textsc{dm$^2$} is compared against a naive decentralized MARL algorithm, independent \textsc{ppo} \citep{schulman2017proximal} (\textsc{ippo}), where individual \textsc{ppo} agents directly receive the team environment reward. Although agents trained under the \textsc{ippo} scheme cannot share information and see only local observations, prior work has shown that \textsc{ippo} can be surprisingly competitive with CTDE methods \citep{yu2021surprising}. We also compare against two  widely used CTDE methods, \textsc{qmix} \citep{rashid18qmix} and \textsc{rmappo} \citep{yu2021surprising}.
These CTDE methods have the advantage of a shared critic network that receives the global state during training.
Thus, their performance is expected to be better than that of decentralized methods with no communication. 
\paragraph{Setup:}
\textsc{dm$^2$} uses the same \textsc{ippo} implementation as the baseline, with the addition of a \textsc{gail} discriminator for each independent agent $i$ to generate an imitation reward signal, $r_{i, \textsc{gail}}$. The scaled \textsc{gail} reward is added to the environment reward $r_{env}$, with scaling coefficient $c \in \mathbb{R}$: $r_{i, mix} = r_{env} + r_{i, \textsc{gail}} * c$. Learning curves of all algorithms are the mean of 5 runs executed with independent random seeds, where each run is evaluated for 32 test episodes at regular intervals during training. The shaded regions on the plots show the standard error. The evaluation metric is the mean rate of battles won against enemy teams during test episodes.

The data for the \textsc{gail} discriminator consists of 1000 joint state-only trajectories (no actions). The data is sampled from checkpoints during training runs of baseline \textsc{ippo} with the environment reward, and \textsc{qmix} with the environment reward.
In runs of \textsc{dm$^2$}, each agent imitates the marginal observations of the corresponding agent from the dataset (i.e., agent $i$ will imitate agent $i$'s observations from the dataset) \footnote{The allied agent teams in our experiments have the same state/action spaces. Thus, the mapping of agents to demonstration trajectories does not matter, as long as it is fixed}.
For each task, demonstrations are sampled from \textsc{ippo} and \textsc{qmix}-trained joint expert policies, executed stochastically for \textsc{ippo} and with an $\epsilon$-greedy sampling for \textsc{qmix}. The win rates achieved by the demonstration policies are plotted as horizontal lines on the graphs.
Additionally, \textsc{sil}\cite{hao2019independent} can be seen as a variation of \textsc{dm$^2$}, with the target demonstrations being the agent's most successful prior trajectories.
Experimental details such as hyperparameters are specified in Appendix \ref{app:exp_details}.
\begin{figure*}[tbh]
\centering
    \begin{subfigure}[]{0.31\linewidth}
        \centering
        \includegraphics[width=\linewidth]{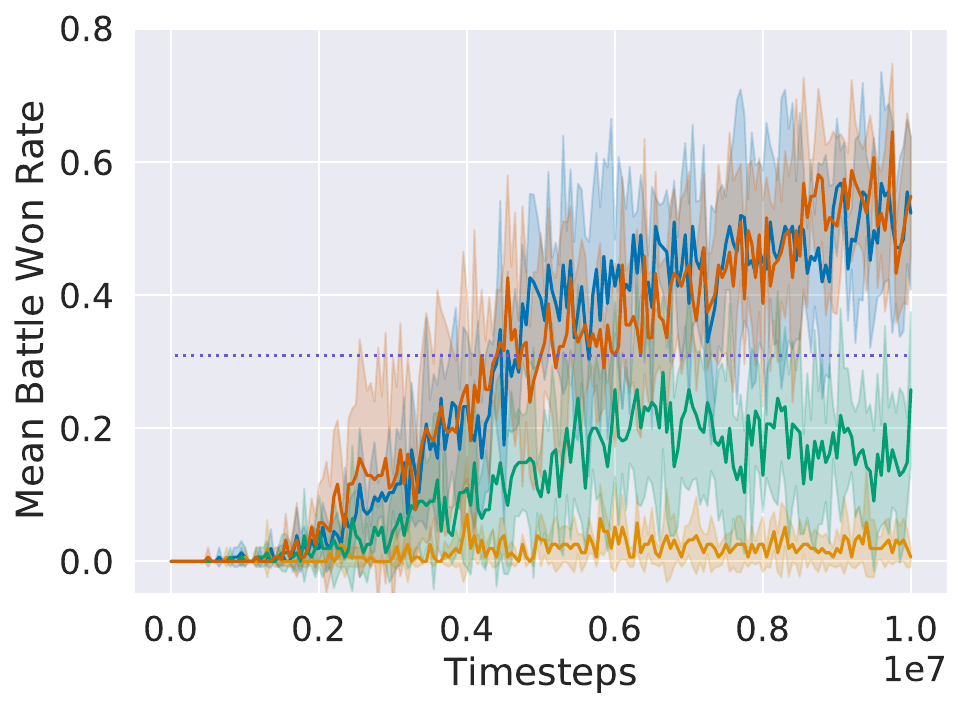}
    \end{subfigure}
    \hspace{0.8cm}
    \begin{subfigure}[]{0.31\linewidth}
        \centering
        \includegraphics[width=\linewidth]{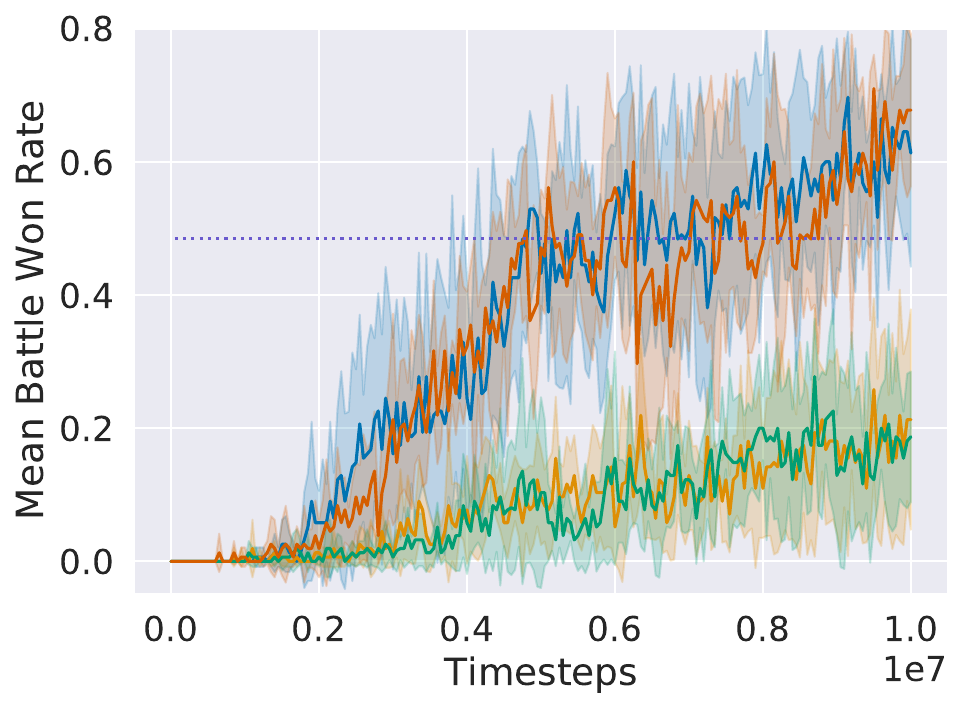}
    \end{subfigure}
    \\
    \begin{subfigure}[]{0.65\linewidth}
        \includegraphics[width=\linewidth]{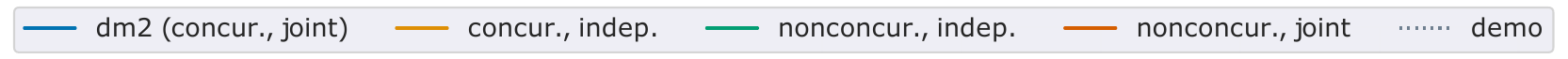}
    \end{subfigure}
\caption{Ablations for \textsc{ippo} trained with $r_{mix}$ on the 5v6 task. The case where the demonstrations are concurrently sampled from co-trained (joint) expert policies corresponds to \textsc{dm$^2$}. Left: Experiments performed with lower quality IPPO demonstration (5m). Right: Experiments performed with higher quality IPPO demonstration (7m).}
\label{fig:5v6_ablations}
\end{figure*}

\subsection{Main Results}

Figure \ref{fig:core_results} shows that in all three tasks, \textsc{dm$^2$} significantly improves learning speed over \textsc{ippo} (the decentralized baseline).
\textsc{qmix} and \textsc{rmappo} (the CTDE baselines) learn faster than \textsc{dm$^2$} and \textsc{ippo} on both tasks, illustrating the challenging nature of the decentralized cooperation problem.
However, on 5v6 and 3sv3z, all methods converge to a similar win rate towards the end of training.
For the demonstrations from \textsc{ippo} experts, \textsc{dm$^2$} surpasses the win rate of the demonstrations.
Despite the significant variance in win rates among the demonstrations for each task, \textsc{dm$^2$} performs similarly.
Similar robustness to demonstration quality is seen even with 10 different demonstration qualities (Figure \ref{fig:demo_learning_curves_all} in Appendix \ref{app:exp_results}). The relative invariance to demonstration quality suggests that the demonstrations provide a useful coordinating signal,  enabling agents to discover higher-return behaviors than those portrayed in the demonstrations. 

On the other hand, using trajectories from earlier in training as demonstrations (\textsc{sil}) shows inconsistent performance.
Its performance is comparable to \textsc{rmappo} in 5v6, but it learns more slowly in the other domains.
This may occur because \textsc{sil} requires examples of successful coordination by the agents, which may be rare in certain tasks.
Using past trajectories as demonstrations also leads to a nonstationary target distribution for imitation, potentially negatively impacting the learning procedure. 

\subsection{Ablation Study}
This section presents an ablation study on the demonstrations.
It investigates whether the coordination of expert agents or their individual competency is more important to the success of \textsc{dm$^2$} in the main experiment.
This comparison is done by considering demonstrations that vary in two dimensions: whether they were sampled from expert teams that form a joint policy (co-trained), or whether they were sampled simultaneously (concurrently sampled).

The first dimension tests the requirement that agents are trained to coordinate with each other, while the second tests whether agents must act  together when generating demonstrations.
Concurrently sampled demonstrations of agents that were not co-trained, gives us examples of individually competent agents acting in the multi-agent setting.

The experiments apply \textsc{dm$^2$} to four possible demonstration styles that vary in the aforementioned two dimensions. A detailed  explanation of how these four demonstration styles were constructed is provided in Appendix \ref{app:exp_details}. The study is performed on the 5v6 task, with the same hyperparameters used in the experiments of the previous section. 

Figure \ref{fig:5v6_ablations} shows the learning curves of the four combinations. The axis that appears to make the greatest difference in learning is whether the demonstrations originate from expert policies that were co-trained, and were thus coordinated.
Whether the agent demonstrations were concurrently sampled does not appear to significantly impact learning.
Similar trends are observed when \textsc{dm$^2$} is trained with the lower quality demonstration (Figure \ref{fig:5v6_ablations}, left).

\section{Discussion and Future Work} \label{section:discussion}

This paper studies distributed MARL for cooperative tasks without communication or explicit coordination mechanisms.
Fully distributed MARL is challenging, since simultaneous updates to different agents' policies can cause them to diverge.
The benefits of distributed MARL are abundant.
Decentralized training could make agents more robust to the presence of agents they were not trained with (e.g. humans). 
Decentralized training could also enable coordination while preserving the privacy of each agent.

The theoretical analysis of this paper shows that individual agents updating their policies turn-by-turn to reduce their distribution mismatch to corresponding expert distributions improves a lower bound to the joint action-matching objective against the joint expert policy. Fully maximizing the lower bound corresponds to recovering the joint expert policy.
The experiments verify that mixing the task reward with the distribution matching reward accelerates cooperative task learning, compared to learning without the distribution matching objective.
The ablation experiments show that expert demonstrations should be from policies that were trained together, but not necessarily concurrently sampled. 

While this work is a meaningful step towards fully distributed multi-agent learning via distribution matching, some open questions remain. 
Future work could consider whether demonstrations sampled from expert policies with other properties, such as those trained with reward signals corresponding to different tasks, could be beneficial for distributed learning. 
The method proposed in this paper could also be leveraged to combine human demonstrations with a task reward for applications of MARL ranging from expert decision making (similar to that done by \citet{gombolay2018robotic} in the context of medical recommendation) or in the context of complex multi-agent traffic navigation \citep{behbahani2019learning}.
Another potential path forward would be considering human in the loop settings such as the TAMER architecture \cite{knox2009interactively}, but in a fully distributed multi-agent setting.

\newpage
\section*{Acknowledgements}

\thanks{This work has taken place in the Learning Agents Research
Group (LARG) at the Artificial Intelligence Laboratory, The University
of Texas at Austin, and at SparkCognition Applied Research.  LARG research is supported in part by the
National Science Foundation (CPS-1739964, IIS-1724157, FAIN-2019844),
the Office of Naval Research (N00014-18-2243), Army Research Office
(W911NF-19-2-0333), DARPA, General Motors, Bosch, and
Good Systems, a research grand challenge at the University of Texas at
Austin.  The views and conclusions contained in this document are
those of the authors alone.  Peter Stone serves as the Executive
Director of Sony AI America and receives financial compensation for
this work.  The terms of this arrangement have been reviewed and
approved by the University of Texas at Austin in accordance with its
policy on objectivity in research.}

\bibliography{refs}

\onecolumn
\appendix
\section*{Appendix}

\section{Convergence Proof Details} \label{app:proofs}

In Section \ref{section:theory}, we lay out some of the conditions for our theoretical analysis.
One of these conditions, that for every policy considered there is a minimal probability of visiting each state, is formalized below.

\begin{condition} 
\label{condition:ergodicity_strong}
Let $0 < \epsilon < 1$, and let $\rho$ be the state visitation distribution induced by any joint policy during training. For all agents $i$ and for all $s$, suppose that $\rho(s) \geq \epsilon$. 
\end{condition}

\begin{algorithm2e}[ht]
\caption{Distributed MARL with distribution matching}
\label{alg_theory}
\SetAlgoLined
\KwIn{Number of agents $K$, expert demonstrations $\mathcal{D}_0, \ldots, \mathcal{D}_K$, environment $env$, number of time-steps per epoch $M$}
\For{$k=0, \ldots, K-1$}{
Initialize discriminator parameters $\phi_k$\;
Initialize policy parameters $\theta_k$\;
}
\While{any($\pi_{\theta_k}$) \text{not converged}}{
\For{$k=0, \ldots, K-1$}{
Gather $m=1, \ldots, M$ steps of data $(s^m, \bm{a}^m, r^m_{env})$ from $env$\;
    Update agent discriminator $r_{\phi_k}$ to maximize Equation \ref{eqn:gail_obj} until convergence to $r_{\phi_k}^{opt}$\;
    Update agent policy $\pi_{\theta_k}$ using TRPO to minimize Equation \ref{eqn:gail_obj}
}

}
\KwOut{$K$ agent policies $\bm{\pi}_\theta$}
\end{algorithm2e}

Recall that the \textbf{joint action-matching objective} is defined over the expected state visitation as the probability that all agent actions match their corresponding experts (plus a constant):

\begin{equation*}
J(\bm{\pi}) = \sum_{s \in S} \rho_{\pi}(s) \left[ (K-1) + \E_{\bm{a} \sim \bm{\pi}}  [\mathds{1}_{\bigcap_i a_i = a^E_i}(s)]\right].
\end{equation*}
where $\mathds{1}_{\bigcap_i a_i = a^E_i}(s)$ indicates the event that \emph{all} the agents took actions that matched their corresponding experts.

We first prove some properties of $J(\bm{\pi})$.

\begin{lemma} \label{lemma:jpi_bounded}
The objective $J(\bm{\pi})$ is bounded by 
\begin{equation*}
   (K-1) \leq  J(\bm{\pi}) \leq K.
\end{equation*}
\end{lemma}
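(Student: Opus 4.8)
The plan is to exploit the fact that $J(\bm{\pi})$ is a weighted average of per-state quantities, each of which is trivially confined to the interval $[K-1, K]$. The two bounds then follow immediately from the observation that weighted averages preserve such bounds.

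First I would establish that the weights $\rho_\pi(s)$ form a genuine probability distribution over states, i.e. $\rho_\pi(s) \geq 0$ for all $s$ and $\sum_{s \in S} \rho_\pi(s) = 1$. Nonnegativity is immediate from the definition $\rho_{\bm{\pi}}(s) = (1 - \gamma)\sum_{t=0}^\infty \gamma^t P(s_t = s \mid \bm{\pi})$, since it is a nonnegative combination of probabilities. For normalization, I would interchange the (absolutely convergent) sum over $s$ with the sum over $t$, use $\sum_{s \in S} P(s_t = s \mid \bm{\pi}) = 1$ for each $t$, and then evaluate the geometric series:
\begin{equation*}
\sum_{s \in S} \rho_\pi(s) = (1 - \gamma)\sum_{t=0}^\infty \gamma^t = (1 - \gamma)\cdot \frac{1}{1-\gamma} = 1.
\end{equation*}

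Next I would bound the bracketed term appearing inside the sum. Because $\mathbbm{1}_{\bigcap_i a_i = a^E_i}(s)$ is an indicator random variable, its expectation $\E_{\bm{a} \sim \bm{\pi}}[\mathbbm{1}_{\bigcap_i a_i = a^E_i}(s)]$ is simply the probability (under the joint policy at state $s$) that every agent matches its expert, hence lies in $[0,1]$. Adding the constant $(K-1)$ shows that for every state $s$,
\begin{equation*}
K-1 \;\leq\; (K-1) + \E_{\bm{a} \sim \bm{\pi}}[\mathbbm{1}_{\bigcap_i a_i = a^E_i}(s)] \;\leq\; K.
\end{equation*}

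Finally, I would combine the two facts: since $J(\bm{\pi})$ is the $\rho_\pi$-weighted average of per-state terms each lying in $[K-1, K]$, and the weights are nonnegative and sum to one, multiplying the per-state inequalities by $\rho_\pi(s)$ and summing over $s$ yields $(K-1)\sum_s \rho_\pi(s) \leq J(\bm{\pi}) \leq K \sum_s \rho_\pi(s)$, which by normalization is exactly $K-1 \leq J(\bm{\pi}) \leq K$. I do not anticipate a genuine obstacle here; the only step requiring mild care is justifying that $\rho_\pi$ is normalized (the role of the $(1-\gamma)$ prefactor and the interchange of summation), after which the result is a one-line averaging argument.
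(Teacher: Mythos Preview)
Your proposal is correct and follows essentially the same approach as the paper: both arguments reduce to observing that the expectation of the indicator lies in $[0,1]$ and that the state-visitation weights $\rho_\pi(s)$ form a probability distribution, so the weighted sum stays within $[K-1,K]$. The paper's version is slightly terser (it factors out $f(\bm{\pi}) = \sum_s \rho_\pi(s)\,\E_{\bm{a}\sim\bm{\pi}}[\mathbbm{1}_{\bigcap_i a_i=a_i^E}(s)]$ and bounds $0 \le f(\bm{\pi}) \le 1$), while you spell out the normalization of $\rho_\pi$ more carefully, but the substance is identical.
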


\begin{proof}
As shorthand, define $f(\bm{\pi}) = \sum_{s \in S} \rho_{\pi}(s) \E_{\bm{a} \sim \bm{\pi}} [  \mathds{1}_{\bigcap_i a_i = a^E_i}(s)]$. Then,  
\begin{equation*}
    J(\bm{\pi}) = (K-1) + f(\bm{\pi}).
\end{equation*}
First, note that $f(\bm{\pi}) \geq 0$ because it is a weighted sum of expectations over indicator functions,  where all weights are non-negative, and it is precisely $0$ if for all states, the joint agent policy $\bm{\pi}$ does not match the correct expert actions. Thus, $J(\bm{\pi})$ is lower bounded by $(K-1)$. For the upper bound, notice that the outer summation is equivalent to the expectation under state visitation distribution $\rho_\pi$. Inside, each expectation over the indicators can be at most 1, implying that $f(\bm{\pi})$ is at most $1$. Thus, $J(\bm{\pi}) \leq K$.
\end{proof}

\begin{lemma} \label{lemma:jpi_maximizer} Suppose the joint expert policy $\bm{\pi}_E$ is deterministic. Then $\bm{\pi}_E$ is the unique maximizer of $J(\bm{\pi})$.
\end{lemma}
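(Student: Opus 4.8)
The plan is to show that $\bm{\pi}_E$ attains the upper bound $K$ established in Lemma~\ref{lemma:jpi_bounded}, and that no other joint policy can. Reusing the shorthand $f(\bm{\pi}) = \sum_{s} \rho_\pi(s)\,\E_{\bm{a}\sim\bm{\pi}}[\mathbbm{1}_{\bigcap_i a_i = a^E_i}(s)]$, so that $J(\bm{\pi}) = (K-1) + f(\bm{\pi})$, the whole argument reduces to characterizing when $f(\bm{\pi}) = 1$. The first observation I would record is that, because the agents act independently, the inner expectation factors as a product of marginal probabilities of matching the expert action, i.e.
\begin{equation*}
    \E_{\bm{a}\sim\bm{\pi}}[\mathbbm{1}_{\bigcap_i a_i = a^E_i}(s)] = \prod_{i=1}^K \pi_i(a^E_i \mid s) =: p(s) \in [0,1].
\end{equation*}

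\emph{Attainment.} First I would verify $J(\bm{\pi}_E) = K$. Since $\bm{\pi}_E$ is deterministic, each factor $\pi_{E_i}(a^E_i\mid s) = 1$, so $p(s) = 1$ at every state. Because $\rho_{\pi_E}$ is a probability distribution over $\mathcal{S}$ (it sums to $1$), this gives $f(\bm{\pi}_E) = \sum_s \rho_{\pi_E}(s)\cdot 1 = 1$, hence $J(\bm{\pi}_E) = K$, matching the upper bound of Lemma~\ref{lemma:jpi_bounded}. Thus $\bm{\pi}_E$ is \emph{a} maximizer.

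\emph{Uniqueness.} Next I would argue that any maximizer coincides with $\bm{\pi}_E$. Suppose $\bm{\pi}$ satisfies $J(\bm{\pi}) = K$, equivalently $f(\bm{\pi}) = 1$. Since $f(\bm{\pi}) = \sum_s \rho_\pi(s)\,p(s)$ is a convex combination (weights $\rho_\pi(s)\ge 0$ summing to $1$) of the quantities $p(s) \le 1$, its value equals $1$ only if $p(s) = 1$ for every state carrying positive weight, i.e.\ for every $s$ with $\rho_\pi(s) > 0$. Here the key ingredient is Condition~\ref{condition:ergodicity_strong}: the minimum visitation probability $\epsilon > 0$ guarantees $\rho_\pi(s) \ge \epsilon > 0$ for \emph{all} $s$, so the condition $p(s)=1$ is forced at every state, leaving no ``unvisited'' states where the policy could deviate freely. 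Finally, $p(s) = \prod_i \pi_i(a^E_i\mid s) = 1$ with each factor in $[0,1]$ forces $\pi_i(a^E_i\mid s) = 1$ for every agent $i$ and state $s$; each marginal therefore places all its mass on the deterministic expert action, so $\bm{\pi} = \bm{\pi}_E$.

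The main obstacle—really the only nontrivial point—is the uniqueness step: without Condition~\ref{condition:ergodicity_strong}, a policy differing from $\bm{\pi}_E$ only on states that are never visited would also attain the maximum, so the claim would fail. I would therefore state explicitly that uniqueness is invoked relative to the minimum-visitation assumption, and take care that the factorization $p(s) = \prod_i \pi_i(a^E_i\mid s)$ (which relies on the product structure of the joint policy together with the deterministic expert) is spelled out, since it is what converts ``the joint match probability is $1$'' into the per-agent conclusion $\pi_i = \pi_{E_i}$.
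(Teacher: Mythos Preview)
Your proposal is correct and follows essentially the same approach as the paper: show that $\bm{\pi}_E$ attains the upper bound $K$ from Lemma~\ref{lemma:jpi_bounded}, then use Condition~\ref{condition:ergodicity_strong} to force any other maximizer to agree with $\bm{\pi}_E$ at every state. The only cosmetic difference is that the paper phrases uniqueness as a contradiction (start from $\bm{\pi}\neq\bm{\pi}_E$, exhibit a state where the match probability is strictly below $1$), whereas you argue directly via the convex-combination characterization and the product factorization $p(s)=\prod_i \pi_i(a^E_i\mid s)$; your version is a bit more explicit but not a different route.
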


\begin{proof}
(Maximization) Since $\bm{\pi}_E$  is deterministic, by definition, each term
\begin{equation*}
    \E_{\bm{a} \sim \bm{\pi}_E} [  \mathds{1}_{\bigcap_i a_i = a^E_i}(s)] = 1.
\end{equation*} 
Thus, $J(\bm{\pi}_E) = K$, which means that $\bm{\pi}_E$ achieves the upper bound of $J$. 

(Uniqueness) Suppose there exists another policy $\bm{\pi} \neq \bm{\pi}_E$ that also achieves the upper bound, i.e. $J(\bm{\pi}) =  K$.
Let $a_i^E := \pi_{E_i}(s)$.
Then there must be an agent $i$ such that with positive probability, $a_i \sim \pi_i(s) $ such that $a_i \neq a_i^E$. Then it is immediate that at state $s$, $\E_{\bm{a} \sim \bm{\pi}} [  \mathds{1}_{\bigcap_i a_i = a^E_i}(s)] < 1$.
Combined with the non-zero probability of visiting every state (Condition \ref{condition:ergodicity_strong}), this inequality then implies $J(\bm{\pi}_E) < K$.
By contradiction, $\bm{\pi}_E$ is the unique optimizer for $J$.
\end{proof}

Next, we establish that individual agents performing GAIL updates maximizes a lower bound on $J(\bm{\pi})$.
We leverage the single-agent GAIL convergence result by \cite{Guan2021WhenWG} to show this result.

\begin{lemma} \label{lemma:guan}
Let $t$ be the time step at which agent $i$'s policy is updated.
For all agents $j$, denote the optimal discriminators of Equation \ref{eqn:gail_obj} as $r_{\phi_j}^{opt}$.
Suppose agent $i$ updates its policy parameters from $\theta_i^t$ to $\theta_i^{t+1}$ such that $\mathcal{L}(\theta_i^{t+1}, \phi_i^{opt} | \theta_{i-}^t) < \mathcal{L}(\theta_i^{t}, \phi_i^{opt} | \theta_{i-}^t)$. This decrease in loss is equivalent to increasing the agent $i$'s expected discriminator reward.
\end{lemma}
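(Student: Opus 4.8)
The plan is to treat this as a bookkeeping argument: among the three terms composing the per-agent loss, only one carries any dependence on the policy parameters $\theta_i$ that are being updated. First I would recall the definition of the per-agent GAIL loss,
\[
\mathcal{L}(\theta_i, \phi_i^{opt} \mid \theta_{i-}) = V(\pi_{E_i}, r^{opt}_{\phi_i} \mid \pi_{E_{i-}}) - V(\pi_{\theta_i}, r^{opt}_{\phi_i} \mid \pi_{\theta_{i-}}) - \psi(\phi_i),
\]
and record that, per the update scheme of Algorithm \ref{alg_theory}, the policy step for agent $i$ is taken with the discriminator frozen at $\phi_i^{opt}$ and the other agents' parameters frozen at $\theta_{i-}^t$.

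The key step is to isolate the $\theta_i$-dependence. The first term $V(\pi_{E_i}, r^{opt}_{\phi_i} \mid \pi_{E_{i-}})$ involves only the fixed expert policies and the fixed optimal discriminator, and the regularizer $\psi(\phi_i^{opt})$ depends only on the fixed discriminator parameters; both are therefore constant with respect to $\theta_i$. Only the middle term $V(\pi_{\theta_i}, r^{opt}_{\phi_i} \mid \pi_{\theta_{i-}^t})$ varies, and by its definition as $\tfrac{1}{1-\gamma}\,\E_{s \sim \rho_{\bm{\pi}},\, a_i \sim \pi_i}[r^{opt}_{\phi_i}(s,a_i)]$ this term is precisely agent $i$'s expected discriminator reward.

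I would then subtract the two loss evaluations directly. Forming $\mathcal{L}(\theta_i^{t+1}, \phi_i^{opt} \mid \theta_{i-}^t) - \mathcal{L}(\theta_i^{t}, \phi_i^{opt} \mid \theta_{i-}^t)$, the constant expert and regularizer terms cancel, leaving
\[
V(\pi_{\theta_i^t}, r^{opt}_{\phi_i} \mid \pi_{\theta_{i-}^t}) - V(\pi_{\theta_i^{t+1}}, r^{opt}_{\phi_i} \mid \pi_{\theta_{i-}^t}).
\]
A strict decrease in the loss is thus equivalent to $V(\pi_{\theta_i^{t+1}}, r^{opt}_{\phi_i} \mid \pi_{\theta_{i-}^t}) > V(\pi_{\theta_i^{t}}, r^{opt}_{\phi_i} \mid \pi_{\theta_{i-}^t})$, i.e.\ a strict increase in agent $i$'s expected discriminator reward, which is exactly the claimed equivalence.

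There is no substantive analytic obstacle here; the statement is a one-line consequence of the additive structure of $\mathcal{L}$ once the $\theta_i$-dependence is isolated. The only point demanding care is verifying that the policy update is genuinely carried out with $\phi_i^{opt}$ and $\theta_{i-}^t$ held fixed, so that neither $\psi$ nor the expert return contributes any $\theta_i$-dependence; if the discriminator were allowed to move during the policy step, the cancellation would no longer be exact and the equivalence would fail.
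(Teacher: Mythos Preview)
Your proposal is correct and mirrors the paper's own proof: both argue that the expert value term and the regularizer are constant in $\theta_i$ (since the discriminator and other agents are held fixed), so a strict decrease in $\mathcal{L}$ is equivalent to a strict increase in $V(\pi_{\theta_i}, r^{opt}_{\phi_i}\mid \pi_{\theta_{i-}^t})$, which the paper then rewrites as the visitation-distribution inequality you identify.
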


\begin{proof}
First, note that updating a single agent policy while keeping all discriminators fixed does not alter the expert value term $V(\pi_{E_k}, r_{\phi_k^{opt}})$ or the regularizer term $\psi(\phi_k^{opt})$ in the loss definition $\mathcal{L}$.
Thus, the condition that agent $i$'s loss has decreased is equivalent to the value of agent $i$ increasing: 
\begin{equation} \label{eq:assumption}
    V(\theta_i^{t+1}, \phi_i^{opt} | \theta_{i-}^t) > V(\theta_i^{t}, \phi_i^{opt} | \theta_{i-}^t).
\end{equation}

For convenience of notation, agent $i$'s policy at time t will be written as $\pi_i^t$, and the $i$th discriminator $r^{opt}_{\phi_i}$ implicitly indicated by the action subscript,  $r(s, a_i)$. Similarly, we will write the \textbf{state} visitation distribution induced by the policies $\pi_i^{t+1}, \pi_{i-}^t$  by $\rho^{t+1}$, and the distribution induced by $\pi_i^{t}, \pi_{i-}^t$  as $\rho^t$. 

Rewriting Equation \ref{eq:assumption} in terms of visitation distributions:

\begin{equation*}
    \frac{1}{1-\gamma} \E_{s \sim \rho^{t+1}, a_i \sim \pi_i^{t+1}}[ r_{\phi_i}(s, a_i) ] > \frac{1}{1-\gamma} 
    \E_{s \sim  \rho^{t}, a_i \sim \pi_i^t}  [  r_{\phi_i}(s, a_i) ]
\end{equation*}

\begin{equation} \label{eq:guan}
     \E_{s \sim \rho^{t+1}, a_i \sim \pi_i^{t+1}}[ r_{\phi_i}(s, a_i) ] >  
    \E_{s \sim  \rho^{t}, a_i \sim \pi_i^t}  [  r_{\phi_i}(s, a_i) ].
\end{equation}

\end{proof}

We next show that $J(\bm{\pi})$ is lower bounded by the sum of the individual action-matching rewards for all agents $i$, over all states.

\thmamobjective*

\begin{proof}
Let us begin by rewriting $J(\bm{\pi})$ in terms of action \textit{mismatches}. 

\begin{align*}
J(\bm{\pi}) &= \sum_{s \in S} \rho_\pi(s) \left[ (K-1) + \E_{\bm{a} \sim \bm{\pi}} [ 1 - \mathds{1}_{\bigcup_i a_i \neq a^E_i}(s)] \right]  \\
    &= \sum_{s \in S} \rho_\pi(s) \left[ (K - 1) + 1 + \E_{\bm{a} \sim \bm{\pi}} [ - \mathds{1}_{\bigcup_i a_i \neq a^E_i}(s)] \right].
\end{align*}

This formulation allows us to apply the Union Bound: 
\begin{align*}
    J(\bm{\pi}) &\geq \sum_{s \in S} \rho_\pi(s) \left[ K +  \sum_{i=1}^K  \E_{a_i \sim \pi_i} [ - \mathds{1}_{a_i \neq a^E_i}(s)] \right] \\
&=  \sum_{s \in S} \rho_\pi(s) \left[ \sum_{i=1}^K \E_{a_i \sim \pi_i} [ 1 - \mathds{1}_{a_i \neq a^E_i}(s)] \right]\\ 
& = L(\bm{\pi}).
\end{align*}
\end{proof}

Theorem \ref{theorem_amo} relates the multi-agent imitation learning objective to a sum over single-agent imitation learning objectives. 
This is important because each agent updates independently to improves its own learning objective in our setting. 
Note also that the additive form of $L(\bm{\pi})$ is similar to the value factorization assumptions made by algorithms like VDN \citep{sunehag18vdn}.

It is difficult to say anything directly about the expected action-matching reward of agent $j \neq i$, $\E_{s \sim \rho^{t+1}} [\mathds{1}_{a_j = a_j}^E(s)]$, as $\rho^{t+1}$ may be a state distribution over which agent $j$ makes more mistakes (i.e. taking actions that don't match the expert's).
While in general agent $j$'s expected reward may decrease due to agent $i$'s update, we show that under Assumptions \ref{condition:ergodicity_strong} and \ref{assump:action_match}, agent $i$'s update increases a lower bound to $L(\bm{\pi})$ that is independent of the state distribution.

\corlwrbound*
\begin{proof}
The proof follows from the definition of $L(\bm{\pi})$ and that for all $s \in S$, and all $\rho$ encountered in training, $\rho(s) > \epsilon$.
\end{proof}

\begin{observation}
$L_\epsilon(\bm{\pi})$ is bounded by
\begin{equation*}
    0 \leq L_\epsilon(\bm{\pi}) \leq \epsilon |\mathcal{S}| K.
\end{equation*}
\end{observation}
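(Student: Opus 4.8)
The plan is to reduce the bound to a counting argument over the two indices of the double sum in the definition of $L_\epsilon(\bm{\pi})$. First I would unpack the definition from Corollary \ref{corr_convergence}, namely
\begin{equation*}
    L_\epsilon(\bm{\pi}) = \sum_{s \in S} \sum_{i=1}^K \epsilon \left[ \frac{1}{c} \E_{a \sim \pi_i} [r_{\phi_i}(s, a) ] \right],
\end{equation*}
and substitute Assumption \ref{assump:action_match}, which gives $\frac{1}{c} r_{\phi_i}(s,a) = \mathbbm{1}_{a_i = a^E_i}(s)$. The key observation is then that $\frac{1}{c}\E_{a \sim \pi_i}[r_{\phi_i}(s,a)] = \E_{a \sim \pi_i}[\mathbbm{1}_{a_i = a^E_i}(s)]$ is simply the probability that agent $i$ matches its expert's action at state $s$, so each such inner term lies in the interval $[0,1]$.

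Given this, both bounds follow immediately. For the lower bound, every summand is a product of the nonnegative factor $\epsilon$ with a probability, hence nonnegative, so the entire double sum satisfies $L_\epsilon(\bm{\pi}) \geq 0$. For the upper bound, I would bound each inner probability by $1$, so that every summand is at most $\epsilon$; since the sum ranges over $|\mathcal{S}|$ states and $K$ agents, there are exactly $|\mathcal{S}| K$ such terms, yielding $L_\epsilon(\bm{\pi}) \leq \epsilon |\mathcal{S}| K$.

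There is no real obstacle here: the argument is a routine bounding of a finite sum of probabilities, relying on the finiteness of the state space assumed in Section \ref{section:theory}. The only point requiring a moment's care is invoking Assumption \ref{assump:action_match} to rewrite $\frac{1}{c} r_{\phi_i}$ as a genuine action-matching indicator, since this is what guarantees the per-term range $[0,1]$ and hence both the sign of the lower bound and the value $1$ used in the upper bound. I would present the whole argument in a single short paragraph rather than as separate cases.
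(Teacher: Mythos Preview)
Your argument is correct: each summand $\frac{1}{c}\E_{a\sim\pi_i}[r_{\phi_i}(s,a)]$ is a probability in $[0,1]$ under Assumption~\ref{assump:action_match}, and the bounds follow by summing over $|\mathcal{S}|K$ terms. The paper itself states this Observation without proof, treating it as immediate from the definition of $L_\epsilon$; your write-up simply makes explicit the one-line justification the paper leaves implicit.
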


\begin{lemma} \label{lem:maximizer}
Suppose the joint expert policy $\bm{\pi}_E$ is deterministic. Then $\bm{\pi}_E$ is the unique maximizer of $L_{\epsilon}(\bm{\pi})$.
\end{lemma}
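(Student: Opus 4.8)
The plan is to reduce $L_\epsilon(\bm{\pi})$ to a sum of per-agent, per-state action-matching probabilities and then maximize it term by term. First I would substitute Assumption \ref{assump:action_match}, namely $r_{\phi_i}(s,a) = c\,\mathbbm{1}_{a_i = a^E_i}(s)$, into the definition of $L_\epsilon$. Using that the expert is deterministic, so that $\E_{a \sim \pi_i}[\mathbbm{1}_{a_i = a^E_i}(s)]$ is exactly the probability that $\pi_i$ places on the expert action $a_i^E(s) = \pi_{E_i}(s)$ at state $s$, this yields
\begin{equation*}
L_\epsilon(\bm{\pi}) = \epsilon \sum_{s \in \mathcal{S}} \sum_{i=1}^K \pi_i\bigl(a_i^E(s) \mid s\bigr).
\end{equation*}
The crucial structural observation is that, unlike $J(\bm{\pi})$ or $L(\bm{\pi})$, the state weight here is the constant $\epsilon > 0$ rather than the policy-dependent visitation $\rho(s)$. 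Consequently the objective decouples completely across $(s,i)$ pairs, and $\epsilon$ acts only as a uniform positive scale factor.

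Given this decoupling, the maximization is immediate. Each summand $\pi_i(a_i^E(s)\mid s)$ is a probability and hence lies in $[0,1]$, so $L_\epsilon(\bm{\pi}) \le \epsilon |\mathcal{S}| K$, recovering the upper bound of the preceding Observation. Equality in every term requires $\pi_i(a_i^E(s)\mid s) = 1$ for all $i$ and all $s$, i.e. each agent deterministically plays its expert action at every state, which is precisely $\bm{\pi} = \bm{\pi}_E$. This establishes that $\bm{\pi}_E$ attains the maximum.

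For uniqueness, suppose $\bm{\pi} \ne \bm{\pi}_E$. Then some agent $i$ and state $s$ satisfy $\pi_i(\cdot \mid s) \ne \pi_{E_i}(\cdot \mid s)$; since $\pi_{E_i}$ places all its mass on $a_i^E(s)$, this forces $\pi_i(a_i^E(s)\mid s) < 1$. That single term is then strictly below $1$ while all remaining terms are at most $1$, so $L_\epsilon(\bm{\pi}) < \epsilon |\mathcal{S}| K = L_\epsilon(\bm{\pi}_E)$, and the maximizer is unique. This argument mirrors the maximization/uniqueness split used for $J(\bm{\pi})$ in Lemma \ref{lemma:jpi_maximizer}, but is cleaner: because each state carries the same positive weight $\epsilon$ rather than the policy-dependent $\rho(s)$, there is no need to separately argue that every state is visited with positive probability. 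The only point requiring care is the initial reduction --- correctly combining Assumption \ref{assump:action_match} with determinism of $\bm{\pi}_E$ to turn the expectation of the matching indicator into the scalar $\pi_i(a_i^E(s)\mid s)$ --- after which both claims follow from a term-by-term bound.
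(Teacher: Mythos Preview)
Your proposal is correct and follows essentially the same approach as the paper: both arguments reduce $L_\epsilon(\bm{\pi})$ via Assumption~\ref{assump:action_match} to a sum of per-agent, per-state action-matching probabilities, verify that $\bm{\pi}_E$ attains the upper bound $\epsilon|\mathcal{S}|K$, and then argue uniqueness by exhibiting a strictly deficient term whenever $\bm{\pi}\neq\bm{\pi}_E$. Your presentation is slightly more explicit in writing out $\pi_i(a_i^E(s)\mid s)$ and in noting that the constant weight $\epsilon$ obviates any appeal to positive state visitation, but the logic is identical to the paper's maximization/uniqueness split.
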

\begin{proof}
Proof for this Lemma follows closely the proof for Lemma \ref{lemma:jpi_maximizer}.

(Maximization) Since $\bm{\pi}_E$  is deterministic, by definition, for each agent $i$, each term
\begin{align*}
    \E_{a_i \sim \pi_{E_i}} [  \mathds{1}_{a_i \neq a^E_i}(s)] &= 0 \\
    \E_{a_i \sim \pi_{E_i}} [1 -  \mathds{1}_{a_i \neq a^E_i}(s)] &= 1.
\end{align*} 
Summing over all agents and taking the sum weighted by $\epsilon$ over all states, $L_\epsilon(\bm{\pi}_E) = \epsilon |\mathcal{S}|K$, which means that $\bm{\pi}_E$ achieves the upper bound of $L_\epsilon$. 

(Uniqueness) Suppose there is at least one agent policy $\pi_i \neq \pi_{E_i}$ such that the joint policy $\bm{\pi}$ also achieves the upper bound, i.e. $L_\epsilon(\bm{\pi}) = \epsilon |\mathcal{S}|K$.
Then there must be a state $s$ such that with non-zero probability, $a_i \sim \pi_i(s)$ such that $a_i \neq a^E_i$.
It follows that $\E_{a_i \sim \pi_{E_i}} [  \mathds{1}_{a_i \neq a^E_i}(s)] > 0$, meaning $L_\epsilon(\bm{\pi}_E) < \epsilon |\mathcal{S}|K$.
By contradiction, $\bm{\pi}_E$ is the unique optimizer of $L_\epsilon$.
\end{proof}

Lemma \ref{lemma:guan} states that a single agent $i$ updating its policy to improve its own GAIL loss is equivalent to increasing the expected action reward $r_{\phi_i}(s, a)$, where the expectation over states is with respect to some updated state visitation distribution $\rho^{t+1}$ (Equation \ref{eq:guan}).
Next, we make an assumption to relate our action-matching reward to the GAIL discriminator reward that is improved by the GAIL algorithm.

\assumpaction*

This assumption is not as strong as it may first appear. 

First, note that the converged  GAIL reward consists of the negative discriminator prediction, $r_{\phi_i}(s, a_i) = - \log D(s, a_i)$.

The discriminator predicts the likelihood ratio between the target visitation and the mix of target and agent visitation.
\begin{align*}
    r_{\phi_i}(s, a_i) = - \log D(s, a_i) &= - \log \left( \frac{\rho_E(s, a_i)}{\rho_E(s, a_i) + \rho(s, a_i)} \right) \\
    &= - \log \left( \frac{\rho_E(s) * \pi_E(a_i|s)}{\rho_E(s) * \pi_E(a_i|s) + \rho(s) * \pi(a_i|s)} \right) \\
    &\geq - \log \left( \frac{\rho_E(s) * \pi_E(a_i|s)}{\rho_E(s) * \pi_E(a_i|s) + \epsilon * \pi(a_i|s)} \right) \\
    &=: r_\epsilon(s, a_i).
\end{align*}

The minimal state visitation assumption states that $\rho(s) \geq \epsilon$ for all $s$, allowing us to relate $r_{\phi_i}(s, a_i)$ to a state-visitation \textit{independent} reward, $r_\epsilon(s, a_i)$. As we will argue next, $r_\epsilon(s, a_i)$ is a similar quantity to the action-matching indicator reward. 

To see this, first suppose $a_i \neq a_i^E$, which would imply that the action-matching indicator function is 0. Since the expert policy is assumed to be deterministic, $\pi_E(a_i | s) = 0$,  implying that $r_\epsilon(s, a_i) = 0$ as well. If $a_i = a_i^E$, then $r_\epsilon(s, a_i^E)$ is not zero, and the only way for agent $i$ to increase $r_\epsilon(s, a_i^E)$ is to increase $\pi(a_i^E | s)$ --- the probability of matching the expert's action. 
Thus, an increase in 
 $r_{\phi_i}(s, a_i)$ implies an increase in $r_\epsilon(s, a_i)$, which behaves similarly to the action-matching indicator function.

Thus far, we have established that each individual agent's policy improvement under the GAIL reward improves a lower bound to the joint action-matching objective, $J(\bm{\pi})$. The following shows that within finite updates, each agent will be able to independently improve its value function until it converges to the expert policy. 

\begin{condition} \label{condition:single-agent-improvement}
Let $V(\pi)$ denote the value of an agent following a single-agent imitation learning algorithm. $|V(\pi_t) - V(\pi^E)|$ is then the optimality gap at update $t$ of the agent. Suppose that $|V(\pi_t) - V(\pi^E)| \leq \eta(t)$, where as $t \rightarrow \infty$, $\eta(t) \rightarrow 0$. 
\end{condition}

This condition says that the single-agent imitation learning process should converge to the optimal (expert) policy with convergence rate dictated by $\eta(t)$. For our setting, \citet{Guan2021WhenWG} shows that the single-agent GAIL algorithm converges (Theorem 3 and 4).

The next corollary shows that convergence of the single-agent imitation learning process is sufficient to guarantee the convergence of the multi-agent imitation learning scheme discussed in the main paper.

\begin{corollary} \label{corr:lep_improve}
There exists $H \in \mathbb{N}^+,$ $H < \infty$ such that within $H$ updates, agent $i$ is able to improve its policy such that it increases the probability of matching the expert's action, summed over all states:
\begin{equation*}
    \sum_{s\in S} \epsilon \left[ \frac1c \E_{a \sim \pi_i^{t+H}} [ r_{\phi_i}(s, a) ]\right] > \sum_{s\in S} \epsilon \left[ \frac1c \E_{a \sim \pi_i^{t}} [ r_{\phi_i}(s, a) ]\right].  
\end{equation*}
\end{corollary}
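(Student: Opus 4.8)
The plan is to translate the statement into a claim about an \emph{unweighted} action-matching sum and then invoke the single-agent convergence guarantee (Condition~\ref{condition:single-agent-improvement}) to force a strict increase in finitely many updates. First I would use Assumption~\ref{assump:action_match} to rewrite each per-state term: since $r_{\phi_i}(s,a) = c\,\mathbbm{1}_{a_i = a_i^E}(s)$, we have $\frac1c\E_{a\sim\pi_i}[r_{\phi_i}(s,a)] = P(\pi_i(s) = a_i^E)$, the probability that agent $i$ matches its expert at $s$. Writing $g_i(\pi_i) := \sum_{s\in S} P(\pi_i(s) = a_i^E)$, the two sides of the desired inequality are exactly $\epsilon\, g_i(\pi_i^{t+H})$ and $\epsilon\, g_i(\pi_i^t)$, so it suffices to show $g_i(\pi_i^{t+H}) > g_i(\pi_i^t)$ for some finite $H$ (assuming $\pi_i^t \neq \pi_{E_i}$; otherwise the agent has already converged to its expert and no update is needed).

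Next I would record that $\pi_{E_i}$ attains the maximum of the single-agent value while the other agents are held fixed. Under Assumption~\ref{assump:action_match} the deterministic expert earns reward $c$ at every state, so $V(\pi_{E_i}) = \frac{c}{1-\gamma}$, while for any $\pi_i$ we have $V(\pi_i) = \frac{c}{1-\gamma}\sum_s \rho_{\pi_i, i-}(s)\,P(\pi_i(s)=a_i^E) \leq \frac{c}{1-\gamma}$. By Condition~\ref{condition:single-agent-improvement} (equivalently, the single-agent GAIL convergence of \citet{Guan2021WhenWG}, applicable because the other agents are fixed and agent $i$ therefore faces a stationary imitation problem), $V(\pi_i^{t'}) \to V(\pi_{E_i})$ as the number of agent-$i$ updates $t' \to \infty$.

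The key bridge --- and the main obstacle --- is to convert this visitation-weighted convergence into convergence of the $\epsilon$-uniformly-weighted quantity $g_i$, since $V$ weights state $s$ by $\rho_{\pi_i, i-}(s)$ whereas $g_i$ weights it uniformly. Here I would invoke Condition~\ref{condition:ergodicity_strong}: writing $\rho^{t'} := \rho_{\pi_i^{t'}, \pi_{i-}}$ and using $\rho^{t'}(s) \geq \epsilon$ for every state, $\frac{1-\gamma}{c}\bigl(V(\pi_{E_i}) - V(\pi_i^{t'})\bigr) = \sum_s \rho^{t'}(s)\bigl(1 - P(\pi_i^{t'}(s)=a_i^E)\bigr) \geq \epsilon\sum_s\bigl(1 - P(\pi_i^{t'}(s)=a_i^E)\bigr) = \epsilon\bigl(|\mathcal{S}| - g_i(\pi_i^{t'})\bigr) \geq 0$. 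Since the leftmost expression tends to $0$, so does $|\mathcal{S}| - g_i(\pi_i^{t'})$, i.e. $g_i(\pi_i^{t'}) \to |\mathcal{S}|$, the global maximum of $g_i$.

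Finally I would close with a limit comparison. If $\pi_i^t \neq \pi_{E_i}$ then $g_i(\pi_i^t) < |\mathcal{S}|$, so setting $\delta := |\mathcal{S}| - g_i(\pi_i^t) > 0$ and using $g_i(\pi_i^{t'}) \to |\mathcal{S}|$, there is a finite $t' > t$ with $g_i(\pi_i^{t'}) > g_i(\pi_i^t)$; taking $H := t' - t$ and multiplying through by $\epsilon > 0$ yields the claimed inequality. I do not expect the limit comparison or the Assumption~\ref{assump:action_match} rewriting to be delicate; the only genuinely load-bearing step is the $\epsilon$-lower-bound of Condition~\ref{condition:ergodicity_strong}, which is what decouples $g_i$ from the shifting state-visitation distribution and lets single-agent value convergence imply improvement of agent $i$'s term in $L_\epsilon(\bm{\pi})$.
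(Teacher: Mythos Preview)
Your proposal is correct and follows essentially the same route as the paper. The paper decomposes the visitation-weighted value as $V(\pi)=a(\pi)+b(\pi)$ with $b(\pi)=\frac{\epsilon}{c}\sum_s \E_{a\sim\pi_i}[r_{\phi_i}(s,a)]$ and argues by contradiction that $b$ must improve in finitely many updates lest $V$ fail to converge to $V(\pi_E)$; your inequality $\frac{1-\gamma}{c}\bigl(V(\pi_{E_i})-V(\pi_i^{t'})\bigr)\geq \epsilon\bigl(|\mathcal{S}|-g_i(\pi_i^{t'})\bigr)$ is exactly the same bound written directly, followed by an explicit limit comparison rather than a contradiction, so the two arguments coincide in substance.
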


\begin{proof}
For a single agent $i$, define $V(\pi) = \sum_{s \in S} \rho_{\pi}(s) \left[ \frac1c \E_{a \sim \pi_i}[r_{\phi_i}(s, a)] \right] $. Rewrite as follows: 

\begin{align*}
    V(\pi) &= \sum_{s \in S} \rho_{\pi}(s) \E_{a \sim \pi_i}[r_{\phi_i}(s, a)] \\ 
    &= \frac1c \sum_{s \in S} (\rho_{\pi}(s) - \epsilon)  \E_{a \sim \pi_i}[r_{\phi_i}(s, a)] + \frac1c \sum_{s\in S} \epsilon \E_{a \sim \pi_i}[r_{\phi_i}(s, a)].
\end{align*}

Define the first term as $a(\pi) := \frac1c \sum_{s \in S} (\rho_{\pi}(s) - \epsilon)  \E_{a \sim \pi_i}[r_{\phi_i}(s, a)]$, and the second term as $b(\pi) := \frac1c \sum_{s\in S} \epsilon \E_{a \sim \pi_i}[r_{\phi_i}(s, a)]$. Note that $b(\pi)$ is the quantity of interest in the corollary. 

The properties of the max operation directly imply that 

\begin{align*}
    \max_{\pi } V(\pi) &= \max_{\pi} [a(\pi)  + b(\pi)] \\
    &\leq \max_{\pi} a(\pi) + \max_{\pi} b(\pi).
\end{align*}

The expert policy $\pi_E$ should maximize $V(\pi)$ for any single-agent imitation learning algorithm. Note also that $\pi_E$ maximizes $a(\pi)$ and $b(\pi)$. Thus in our setting, the inequality in the above set of equations is actually an equality for the expert policy: 
\begin{align*}
    \max_{\pi } V(\pi) &= V(\pi_E) \\
    &= a(\pi_E) + b(\pi_E) 
    = \max_{\pi} a(\pi) + \max_{\pi} b(\pi).
\end{align*}

Assume that there is a policy $\pi \neq \pi_E$ such that $b(\pi) < b(\pi_E)$, and that $b(\pi)$ cannot be improved within finite updates.
This contradicts Condition 
\ref{condition:single-agent-improvement}, which establishes that the single-agent imitation learning algorithm should be able to improve the agent policy until its value converges to the value of the expert policy.
\end{proof}

Corollary $\ref{corr:lep_improve}$ implies that within a finite number of policy updates by agent $i$, the quantity $L_\epsilon(\bm{\pi})$ increases, because the other terms corresponding to agents $j \neq i$ are unchanged. 
By Theorem \ref{theorem_convergence} and Lemma \ref{lem:maximizer}, $L_\epsilon(\bm{\pi})$ is upper bounded by the constant $\epsilon |S|K$. 
Thus, by the monotone convergence theorem, the objective $L_\epsilon(\bm{\pi})$ converges. 
Further, the expert policy $\bm{\pi_E}$ is a maximizer of $L(\bm{\pi})$, $L_\epsilon(\bm{\pi})$, and $J(\bm{\pi})$ (Lemma \ref{lemma:jpi_maximizer} and Lemma \ref{lem:maximizer}).

\subsection{Mixed Task and Imitation Reward}

\thmmixed*
\begin{proof}

Let $R_{c, i} = \alpha R_T + \beta R_{I, i}$. The following reasoning is on a per-agent basis, so we drop the $i$ from $R_{c, i}$ and $R_{I, i}$ for convenience. For $\pi_{E_i}$ to not be a Nash equilibrium with respect to $R_c$ there needs to exist a policy $\tilde{\pi}_{i}$ such that 
\begin{equation*}
\mathbb{E}[R_c(\tilde{\pi}_{i}(s) | \pi_{E_{-i}})] > \mathbb{E}[R_c(\pi_{E_i}(s) | \pi_{E_{-i}})].
\end{equation*}
That implies 
\begin{align*}
    & \alpha \mathbb{E}[R_T(\tilde{\pi}_{i}(S) | \pi_{E_{-i}})] 
    + \beta \mathbb{E}[R_I(\tilde{\pi}_{i}(S) | \pi_{E_{-i}})] \\
    &> \alpha  \mathbb{E}[R_T(\pi_{E_i}(s) | \pi_{E_{-i}})] + 
\beta \mathbb{E}[R_I(\pi_{E_i}(s) | \pi_{E_{-i}})].   
\end{align*}
But by definition, for all $\pi_{E_i}(s)$,
\begin{equation*}
    \mathbb{E}[R_T(\pi_{E_i}(s) | \pi_{E_{-i}})] \geq \mathbb{E}[R_T(\tilde{\pi}_{i}(S) | \pi_{E_{-i}})]
\end{equation*}
and 
\begin{equation*}
    \mathbb{E}[R_I(\pi_{E_i}(s) | \pi_{E_{-i}})] \geq \mathbb{E}[R_I(\tilde{\pi}_{i}(S) | \pi_{E_{-i}})],
\end{equation*}
which is a contradiction.
\end{proof}

\begin{figure*}[tbh]
\centering
    \begin{subfigure}[]{0.4\linewidth}
        \centering
        \includegraphics[width=\linewidth]{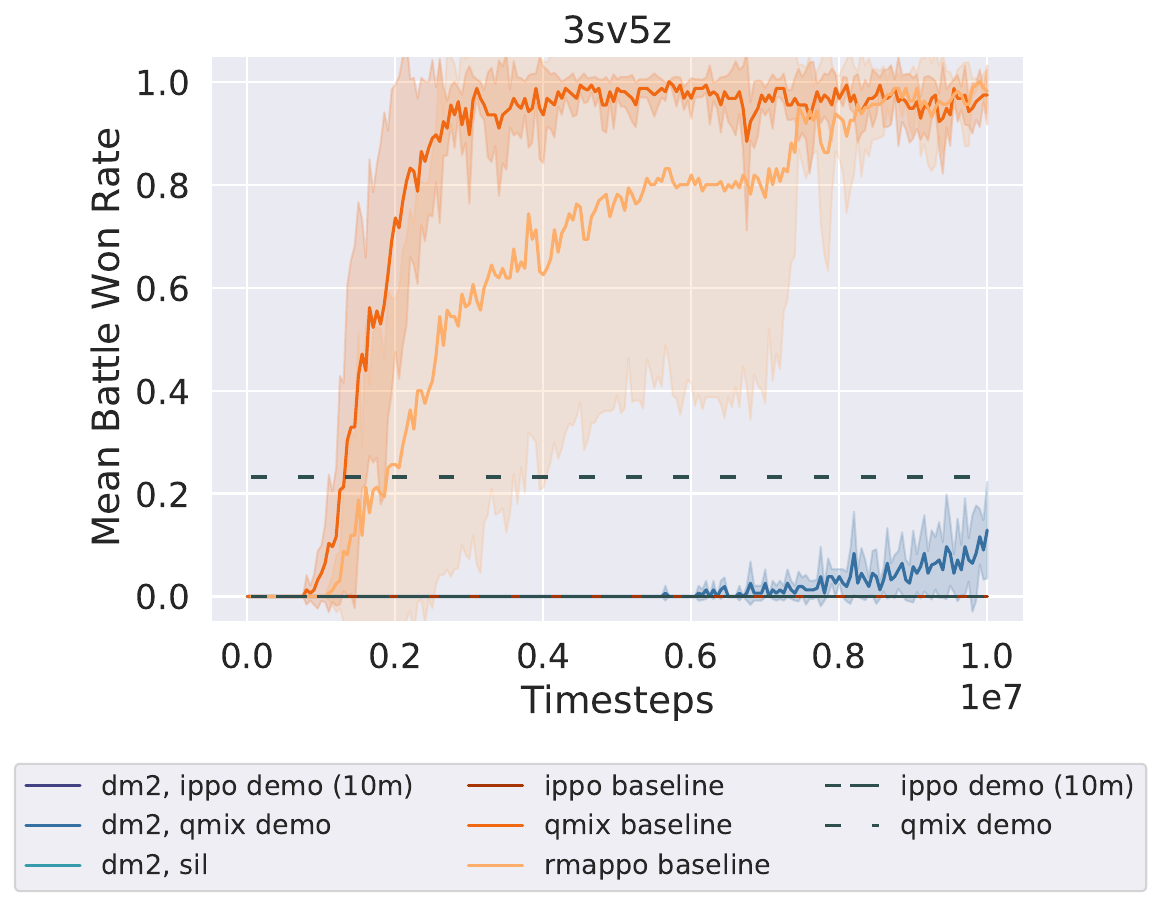}
    \end{subfigure}
    \hspace{0.8cm}
    \begin{subfigure}[]{0.4\linewidth}
        \centering
        \includegraphics[width=\linewidth]{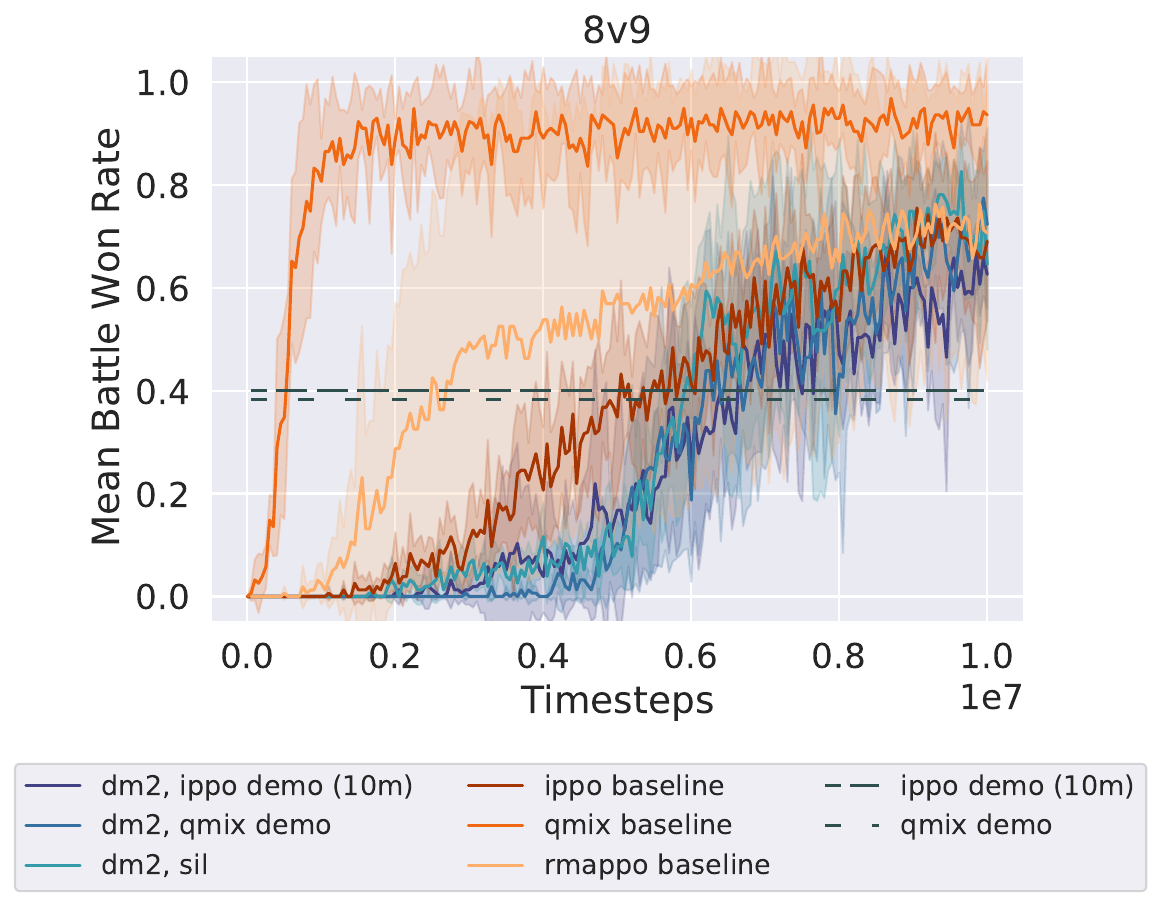}
    \end{subfigure}
\caption{DM$^2$ comparison with baselines in additional maps in the StarCraft domain. These results are an extension of the main results from the paper.}
\label{fig:more_core}
\end{figure*}

\section{Supplemental Experimental Results}
\label{app:exp_results}
This section provides further commentary on results in the main paper and presents additional, supporting experiments.
Figure \ref{fig:more_core} is an extension of our core results on two more maps in the Starcraft domain (3sv5z and 8v9), presented here due to space constraints.

\paragraph{Demonstration Styles of Ablation Study. }
The ablation study examines expert demonstrations that vary in two dimensions: co-trained versus concurrently  sampled.
For co-trained agents with demonstrations sampled non-concurrently, the demonstrations may be sampled from co-trained expert policies, but each agent's demonstrations originate from disjoint episodes.
However, for agents that were not trained together but whose demonstrations are sampled concurrently, demonstrations could be obtained from expert policies that were each trained in separate teams, but executed together in the same environment. To ensure that each expert policy is of similar quality---despite not being trained together---the joint expert policies are trained with different seeds of the same algorithm.

\paragraph{Behavioral Cloning Pretraining. } 

Distribution matching is not the only method to use demonstrations.
A more naive approach to utilize demonstrations is to use behavioral cloning (BC) \citep{bain1995framework} (a form of supervised learning) on the dataset of state-action pairs $\mathcal{D} = \langle (s_0, a_0), (s_1, a_1),  \ldots, (s_N, a_N) \rangle$. 
BC is accomplished by learning to predict the maximum likelihood action according to the dataset on the states present in the dataset.
In practice, this prediction is learned by minimizing the negative log likelihood of the expert action on these states.
This experiment pre-trains the agent policy with BC before the agents interact with the environment, after which point they learn using IPPO.

BC typically suffers from a distribution mismatch problem (also known as the covariate shift problem), where the agent’s state visitation when interacting with the environment differs from the expert data distribution, leading to poor imitation even in single-agent settings.
Behavioral cloning also requires a dataset of a size that increases quadratically with the horizon of the problem to learn successful policies \citep{ross2011reduction}.
These issues are likely to be exacerbated when dealing with multiple agents. The agents might minimize the supervised learning loss, but it is unlikely that the agents would learn to coordinate effectively.
A second issue with using the supervised learning loss to pre-coordinate agent policies is that such coordination is unlikely to last once training with IPPO proceeds.

The result for this alternative usage of the expert demonstrations is presented in Figure \ref{fig:bc_pretraining}.
As detailed above, BC does not learn to imitate the demonstrations well enough to recover their performance (indicated by the dashed lines), and as training proceeds, the benefits of BC vanish as IPPO training proceeds.

\begin{figure*}[hbtp]
\centering
    \begin{subfigure}[]{0.4\linewidth}
        \centering
        \includegraphics[width=\linewidth]{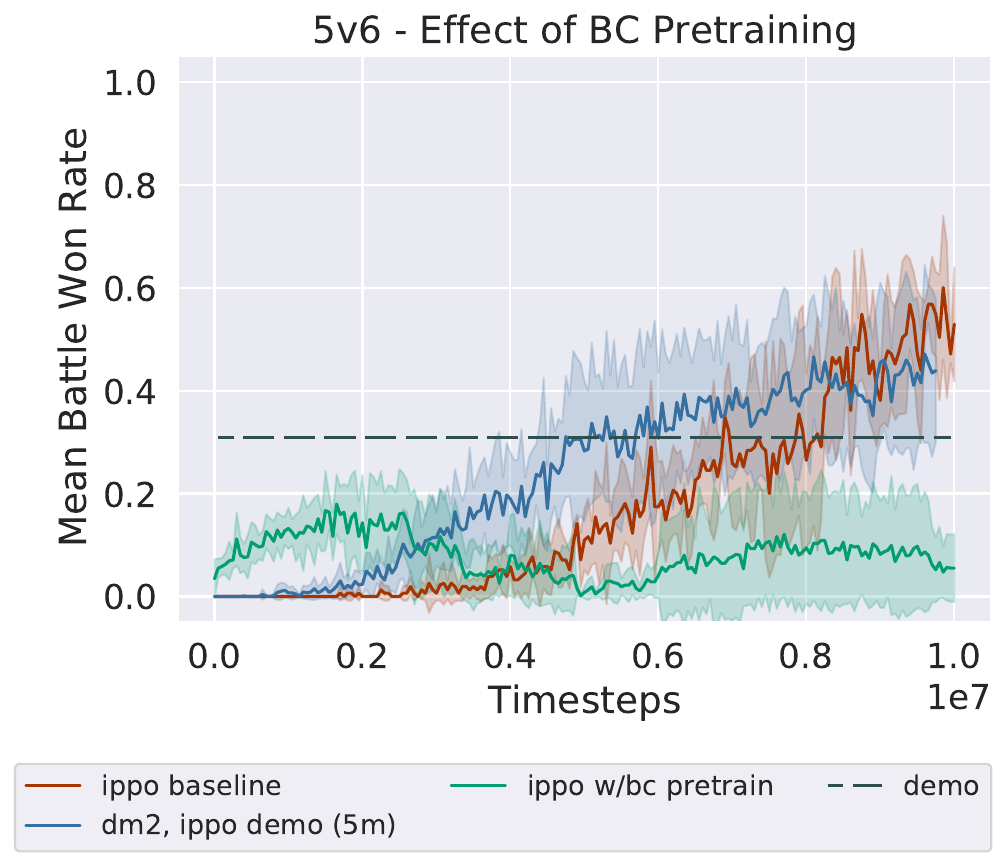}
    \end{subfigure}
    \hspace{0.8cm}
    \begin{subfigure}[]{0.4\linewidth}
        \centering
        \includegraphics[width=\linewidth]{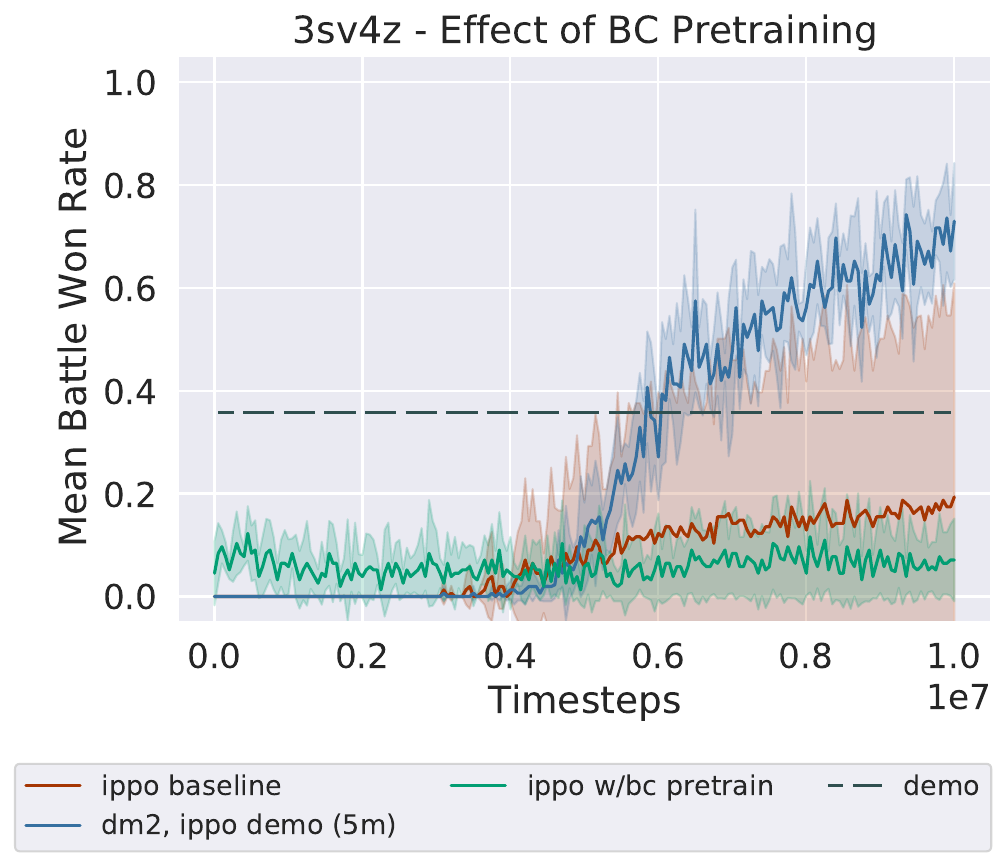}
    \end{subfigure}
\caption{IPPO trained with $r_{mix}$, with and without behavioral cloning pretraining on the expert dataset. Demonstration qualities are shown as horizontal lines. All experiments are performed with demonstrations sampled from IPPO at $5m$ timesteps of training.}
\label{fig:bc_pretraining}
\end{figure*}

\paragraph{Effect of GAIL Reward. }
DM$^2$ consists of IPPO trained with $r_{mix}$, which is a mixture of the environment and GAIL reward signal. Here, we present a brief ablation on the mixed reward (Figure \ref{fig:gail_effect}). For both IPPO demonstration qualities, IPPO trained on the GAIL reward achieves a far lower win rate than IPPO trained on the environment reward (labelled as IPPO baseline) or DM$^2$. This result provides further evidence that the primary benefit derived by DM$^2$ comes from the coordination shown in the demonstrations, rather than from individual imitation of expert behaviors. 

\begin{figure*}[t]
\centering
    \begin{subfigure}[]{0.4\linewidth}
        \centering
        \includegraphics[width=\linewidth]{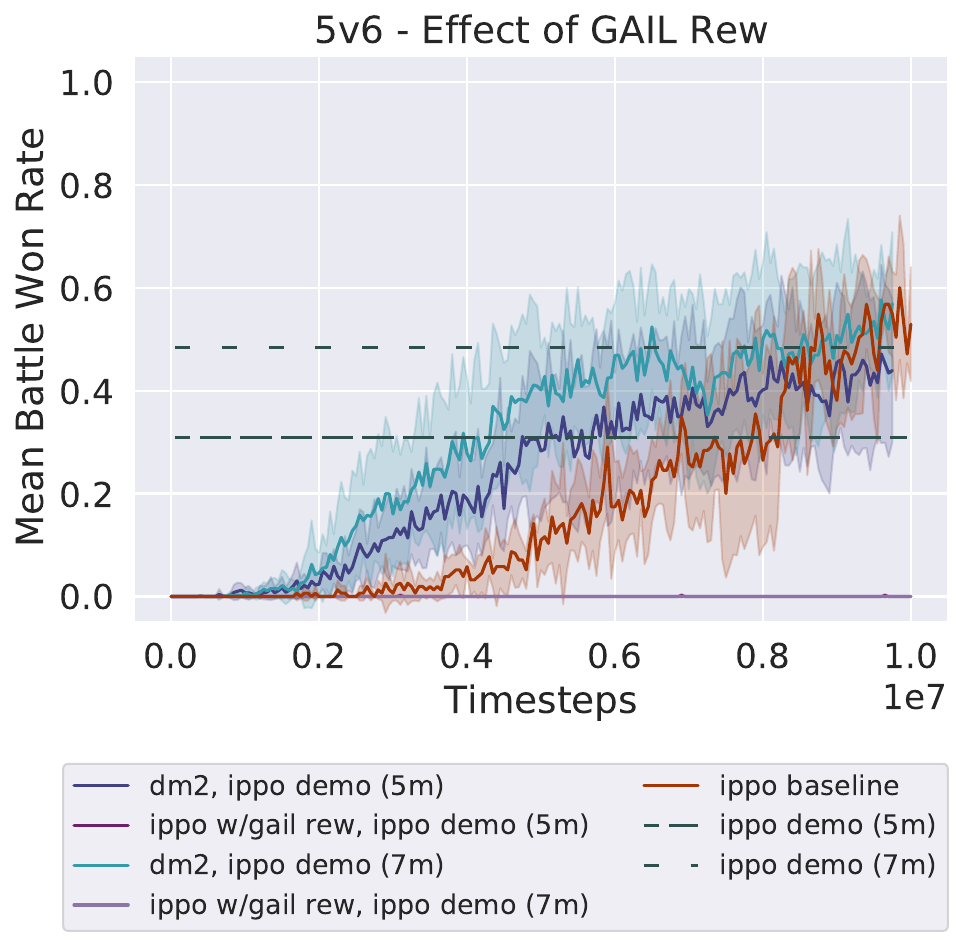}
    \end{subfigure}
    \hspace{0.8cm}
    \begin{subfigure}[]{0.4\linewidth}
        \centering
        \includegraphics[width=\linewidth]{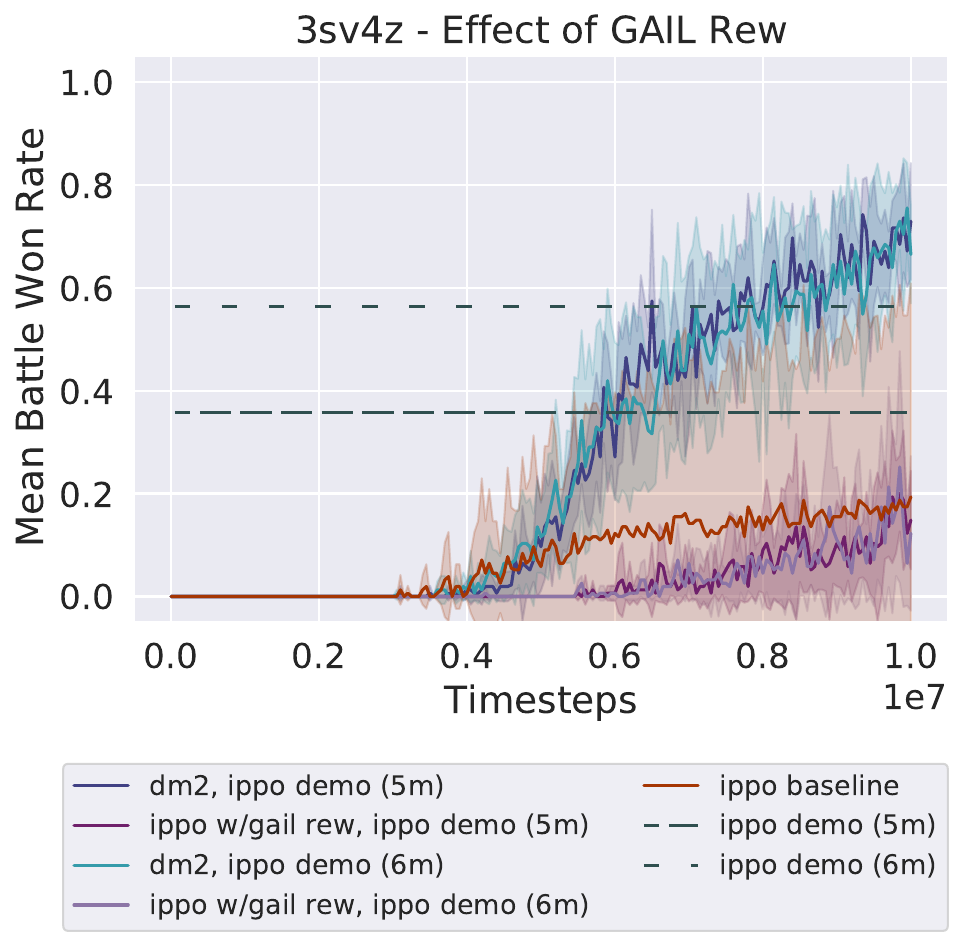}
    \end{subfigure}
\caption{IPPO trained with $r_{gail}$ and IPPO trained with $r_{env}$ only on the 5v6 and 3sv4z tasks. DM$^2$ is shown for reference. Demonstrations are sampled from IPPO policies. The win rates achieved by demonstrations are plotted as horizontal lines.}
\label{fig:gail_effect}
\end{figure*}

\paragraph{Sensitivity to Demonstration Quality.}

One finding in the experimental section of the paper is that DM$^2$ is relatively insensitive to demonstration quality beyond a certain baseline level of competence. To further support this claim, we train DM$^2$ with more demonstration qualities (where demonstrations are sampled from IPPO policies). Figure \ref{fig:demo_learning_curves_all} shows that the algorithm improves monotonically as the demonstration quality improves, but quickly saturates. This result indicates it is important to supply good demonstrations, but  not necessary to supply optimal demonstrations.

\begin{figure}[htbp]
    \centering
    \includegraphics[width=0.75\linewidth]{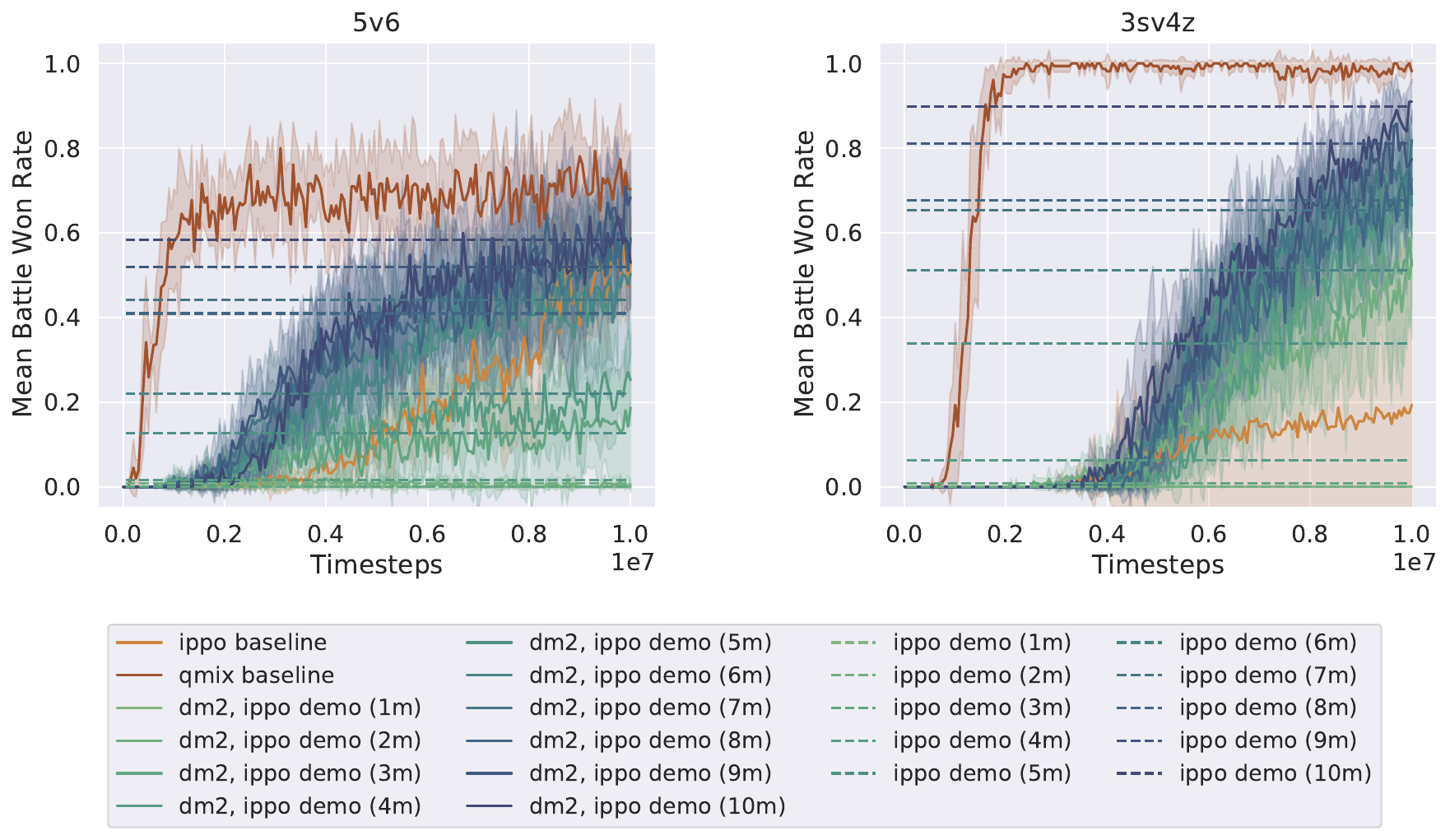}
    \caption{Learning curves of DM$^2$ (our method) trained with demonstrations sampled every million steps in the learning of the original demonstrator policy. Horizontal dotted lines indicate the demonstration qualities, colored to match corresponding learning curves. QMIX (centralized baseline) is included for reference.}
    \label{fig:demo_learning_curves_all}
\end{figure}

\section{Experimental Details} \label{app:exp_details}

\begin{wrapfigure}{r}{0.5\textwidth}
  \begin{center}
    \includegraphics[width=0.4\textwidth]{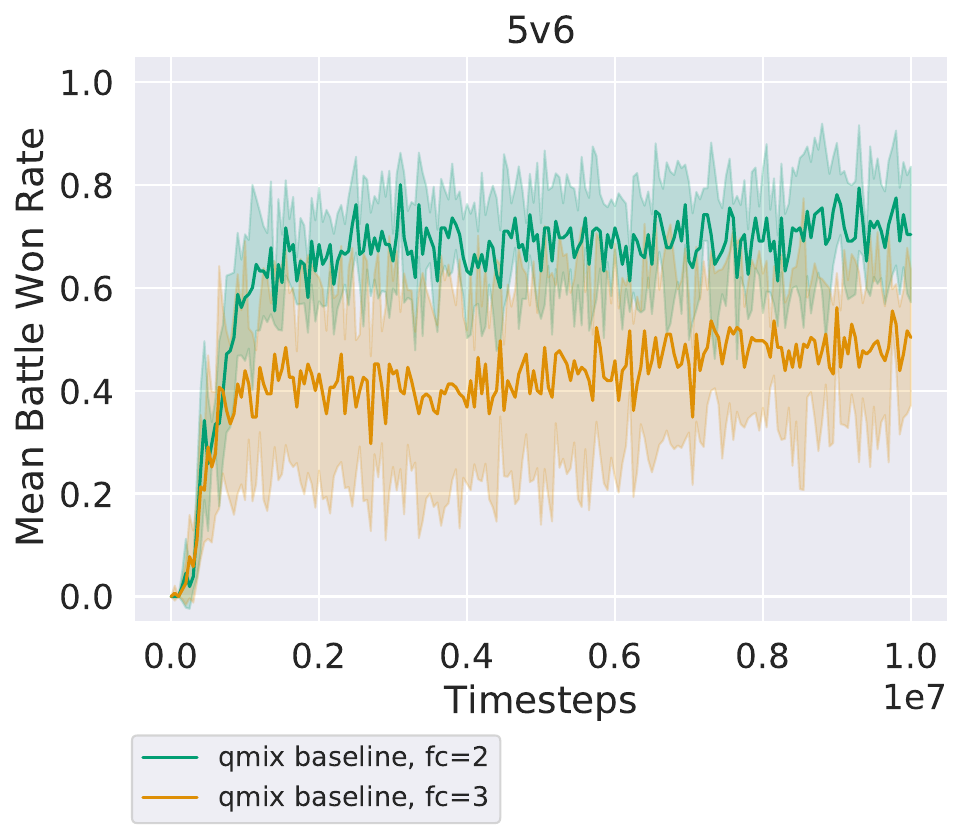}
  \end{center}
    \caption{QMIX is sensitive to the agent policy architecture. Performance on the 5v6 task suffers significantly when an extra fully connected layer is added.}
    \label{fig:qmix_policy_arch}
\end{wrapfigure}

\paragraph{Implementation Details.}

The algorithm implementations are based on the multi-agent PPO implementations provided by \citet{yu2021surprising} (MIT license) and the PyMARL code base \citep{samvelyan19smac} (Apache license). The StarCraft environment is also provided by \citet{samvelyan19smac} (MIT license). 

All decentralized MARL implementations in this paper have fully separate policy/critic networks and optimizers per agent. That is, there is no parameter sharing amongst agents. For all IPPO agents, the policy architecture is two fully connected layers, followed by an RNN (GRU) layer. Each layer has 64 neurons with ReLU activation units. The critic architecture is the same as the policy architecture. For DM$^2$ agents, the policy and critic architectures are identical to IPPO. The discriminator architecture consists of two fully connected layers with tanh activation functions. 

The centralized MARL algorithms implement agent policy networks with parameter sharing, where agents have a centralized value network. For QMIX agents, the policy architecture is the same except there is only a single fully connected layer before the RNN layer \footnote{This is the architecture used in \citet{rashid18qmix}}. We attempted running QMIX with the the IPPO agent architecture, but found that the performance of QMIX significantly suffered (Figure \ref{fig:qmix_policy_arch} on 5v6). Thus, for the QMIX experiments in the main body of the paper, the better-performing policy architecture was applied. RMAPPO agents were trained directly using the code published by \citet{yu2021surprising}.

\paragraph{Hyperparameters.}

For QMIX, the default parameters specified in \citet{rashid18qmix} are used for both tasks. For IPPO, and the IPPO component of DM$^2$, mostly default parameters (as specified in \citep{rashid18qmix, yu2021surprising}) were used. Algorithm hyperparameters that varied between tasks or were tuned are provided in Table \ref{tbl:ippo_hyperparam}. The remaining hyperparameters may be viewed with the code repository.

We found that for DM$^2$ to learn successfully from QMIX demonstrations, it was sometimes necessary to inject a small amount of random noise into the demonstration sampling process, so that the demonstrations did not constrain the exploration of the learning policies. Specifically, the demonstrations from QMIX were sampled from $\epsilon$-greedy QMIX policies, where $\epsilon$ was chosen so that the win rate did not fall more than $10\%$. QMIX $\epsilon$ values are provided in Table \ref{tbl:ippo_hyperparam}.

We conducted a hyperparameter search over the following GAIL parameters: the GAIL reward coefficient, the number of epochs that the discriminator was trained for each IPPO update, the buffer size, and the batch size. The final selected values are given in Table \ref{tbl:gail_hyperparam}.

\paragraph{Computing Architecture. }
All IPPO, QMIX, DM$^2$ experiments were performed without parallelized training; RMAPPO experiments were performed with parallelized training (as is the default in the RMAPPO codebase). The servers used in our experiments ran Ubuntu 18.04 with the following configurations:

\begin{itemize}
    \item Intel Xeon CPU E5-2698 v4; Nvidia Tesla V100-SXM2 GPU. %
    \item Intel Xeon CPU E5-2630 v4;  Nvidia Titan V GPU. %
    \item Intel Xeon Gold 6342 CPU; Nvidia A40 GPU. %
\end{itemize}

\begin{table}[hbp]
    \begin{minipage}{.5\linewidth}
    \centering
    \begin{tabular}{@{}llll@{}}
    \hline
                 & 5v6  & 3sv4z & 3sv3z \\ \hline
    epochs       & 10   & 15    & 15    \\
    buffer size  & 1024 & 1024  & 1024  \\
    gain         & 0.01 & 0.01  & 0.01  \\
    clip         & 0.05 & 0.2   & 0.2   \\
    qmix epsilon & 0    & 0     & 0.1  
    \\
    \hline 
    \\
    \end{tabular}
    \caption{IPPO Hyperparameters.}
    \label{tbl:ippo_hyperparam}
    \end{minipage}
    \begin{minipage}{.5\linewidth}
    \centering
    \begin{tabular}{llll}
    \hline
    \begin{tabular}{@{}llll@{}} 
                  & 5v6  & 3sv4z & 3sv3z \\ \hline
    gail rew coef & 0.3  & 0.05  & 0.3   \\
    discr epochs  & 120  & 120   & 120   \\
    buffer size   & 1024 & 1024  & 1024  \\
    batch size    & 64   & 64    & 64    \\
    n exp eps  & 1000 & 1000  & 1000 
    \end{tabular}
    \\ 
    \hline
    \\
    \end{tabular}
    \caption{GAIL Hyperparameters.}
    \label{tbl:gail_hyperparam}
    \end{minipage}
\end{table}

\end{document}